\newtheorem{theorem}{Theorem}
\newtheorem{corollary}{Corollary}[theorem]
\newtheorem{lemma}[theorem]{Lemma}
\theoremstyle{definition}
\newcommand{\iu}{{i\mkern1mu}}
\newcommand\scalemath[2]{\scalebox{#1}{\mbox{\ensuremath{\displaystyle #2}}}}
\newcommand{\ket}[1]{\left|#1\right\rangle}
\newcommand{\bra}[1]{\left\langle #1\right|}
\begin{document}

\title{Adaptive Phase Estimation with Squeezed Vacuum Approaching the Quantum Limit}

\author{M. A. Rodríguez-García}
\author{F. E. Becerra}
\affiliation{Center for Quantum Information and Control, Department of Physics and Astronomy, University of New Mexico, Albuquerque, New Mexico 87131, USA}


\maketitle

\begin{abstract}
Phase estimation plays a central role in communications, sensing, and
information processing. Quantum correlated states, such as squeezed states,
enable phase estimation beyond the shot-noise limit, and in principle approach
the ultimate quantum limit in precision, when paired with optimal quantum
measurements. However, physical realizations of optimal quantum measurements
for optical phase estimation with quantum-correlated states are still unknown.
Here we address this problem by introducing an adaptive Gaussian measurement
strategy for optical phase estimation with squeezed vacuum states that, by
construction, approaches the quantum limit in precision. This strategy builds
from a comprehensive set of locally optimal POVMs through rotations and
homodyne measurements and uses the Adaptive Quantum State Estimation framework
for optimizing the adaptive measurement process, which, under certain
regularity conditions, guarantees asymptotic optimality for this quantum
parameter estimation problem. As a result, the adaptive phase estimation
strategy based on locally-optimal homodyne measurements achieves the quantum
limit within the phase interval of $[0, \pi/2)$. Furthermore, we generalize
this strategy by including heterodyne measurements, enabling phase estimation
across the full range of phases from $[0, \pi)$, where squeezed vacuum allows
for unambiguous phase encoding. Remarkably, for this phase interval, which is
the maximum range of phases that can be encoded in squeezed vacuum, this
estimation strategy maintains an asymptotic quantum-optimal performance,
representing a significant advancement in quantum metrology.
\end{abstract}

\section{Introduction}

Quantum metrology uses the quantum properties of physical systems to enhance the
measurement precision of physical quantities beyond the classical limits
\cite{giovannetti2011advances, degen2017quantum}. Quantum mechanics states that
all physical observables are represented by self-adjoint operators on a Hilbert
space. As such, the measurement of a physical quantity of a system involves
projecting the quantum state of such system onto one of the eigenspaces of the
corresponding self-adjoint operator. However, certain physical quantities, such
as time, phase, or temperature, lack an associated self-adjoint operator
\cite{Hayashi2005,Holevo2011}. Consequently, to determine the values of these
physical quantities, it is necessary to measure some observables of the system
and estimate their values from the observed results. This process is referred to
as quantum parameter estimation \cite{Hayashi2005,Holevo2011}.

Among different parameter estimation problems, the problem of phase estimation
is ubiquitous in many areas of physics and engineering including, but not
limited to, gravitational wave detection \cite{aasi2013enhanced}, quantum
imaging \cite{gatti2008quantum}, atomic clocks \cite{kruse2016improvement},
magnetometry \cite{danilin2018quantum}, and quantum information processing
\cite{gilchrist2004schrodinger}. However, the performance of traditional phase
estimation methods is limited by the fundamental properties of the physical
states carrying the phase information. The maximum achievable precision for
phase estimation for probe states that lack quantum correlations, typically used
for phase estimation, is defined as shot-noise limit (SNL) \cite{
  giovannetti2011advances, Wiseman2009_book}.

Numerous methods have been developed for achieving phase estimation beyond the
SNL by exploiting probe states with inherent quantum correlations. Among
different types of quantum correlations, entanglement holds a significant
potential for improving precision in phase estimation. Nevertheless, highly
entangled states used for phase estimation, such as NOON states, are delicate
and can be readily disrupted by loss, environmental noise, and decoherence,
thereby limiting their practicality for real world applications
\cite{haroche1998entanglement,escher2011quantum,polino2020photonic,barbieri2022optical}.
In this regard, squeezed state probes offer a more viable alternative for robust
phase estimation \cite{caves1981quantum, Maccone2020squeezingmetrology}.
Squeezed states allow for reducing the quantum noise in one observable below the
SNL at the expense of an increased noise in another non-commuting observable.
This reduction in quantum noise can significantly enhance the precision of phase
measurements \cite{drummond2004quantum,aasi2013enhanced,kruse2016improvement,
  Maccone2020squeezingmetrology}, and is a valuable resource for enabling robust
optical quantum metrology and phase estimation.

Advances in photonic quantum technologies for phase estimation and quantum
metrology have yielded squeezed light sources with a high degree of squeezing
\cite{vahlbruch2007quantum, vahlbruch2016detection, schonbeck201813,
  PhysRevLett.128.083606}. Moreover, experimental demonstrations of quantum
metrology and sensing utilizing squeezed optical probes have achieved
sensitivities surpassing the SNL \cite{aasi2013enhanced, Berni2015,
  PhysRevLett.130.123603, lawrie2019quantum}. However, there remain significant
challenges in devising optimal estimation strategies, including optimal
measurements and estimators, that can efficiently attain the ultimate quantum
limit of precision for any optical phase estimation problem.

A noteworthy measurement approach for optical phase estimation with squeezed
states is the homodyne measurement. This measurement has the potential for
reaching the quantum limit for a specific, optimized phase when a predetermined
level of squeezing is present in the probe state \cite{Olivares2009}. However,
this optimal phase must be known beforehand in order to reach the quantum limit,
making this approach impractical. To overcome this limitation, two-step adaptive
methods for phase estimation allow for increasing the range of phases within
$[0,\pi/2)$ for which estimation below the SNL is possible
\cite{Olivares2009,Monras2006}. These methods can approximate the quantum limit
in precision in the asymptotic limit of many input states for phases around this
optimal phase. However, as the input phase deviates from the predetermined
optimal phase of homodyne, these estimation strategies show a considerable
discrepancy with the quantum limit.

In this work, we theoretically demonstrate a multi-step adaptive Gaussian
measurement strategy for optical phase estimation with squeezed vacuum states
that, by construction, approaches the quantum limit in precision with a fast
convergence rate for any phase encoded in squeezed vacuum. This estimation
strategy uses homodyne measurements to implement a comprehensive set of locally
optimal POVMs (Positive Operator Value Measures). Then the strategy performs
adaptive optimization based on the Adaptive Quantum State Estimation (AQSE)
framework to ensure the asymptotic consistency and efficiency of the estimator
of the optical phase \cite{Fujiwara2006}. Based on rigorous mathematical
analysis, we prove that this adaptive strategy approaches the quantum limit for
phases within $[0, \pi/2)$ in the asymptotic limit of many adaptive steps.
Furthermore, we generalize this strategy to incorporate heterodyne sampling
making it possible to extend the parametric range to $[0, \pi)$, which is the
maximum range of phases that can be encoded in squeezed vacuum. We show that
this combined homodyne-heterodyne strategy maintains an asymptotic quantum
optimal performance.

The paper is organized as follows: In Sec.~\ref{sec:II} we provide a concise
overview of the theory of single parameter estimation. Then, we discuss the
problem of optical phase estimation in the context of quantum systems, followed
by an overview of phase estimation with squeezed states. In
Sec.~\ref{sec:Op_dyne}, we describe the proposed optimal phase estimation
strategy based on adaptive Gaussian measurements with squeezed vacuum states. By
leveraging homodyne measurements and rotations, we construct a collection of
locally optimal POVMs, which allows us to apply the mathematical framework of
AQSE to Gaussian measurements and feedback \cite{Nagaoka2005,Fujiwara2006}.
Through formal mathematical analysis, we show that this adaptive measurement
process allows for extracting the maximum possible information pertaining to the
phase encoded in squeezed vacuum states in the asymptotic regime.
Appendix~\ref{Appendix_Proof} gives a mathematical proof of the asymptotic
optimality of the adaptive strategy, showing its convergence to the quantum
Cramér-Rao lower bound (QCRB). In Sec.~\ref{sec:Performance}, we use numerical
simulations to evaluate the performance of this strategy, and investigate its
performance under losses and system imperfections in Sec.~\ref{sec:Losses}. We
observe that this strategy approaches the quantum limit for phases within
$[0,\pi/2)$, outperforming previous phase estimation strategies.
Sec.~\ref{sec:General-dyne}, describes the combined homodyne-heterodyne strategy
for phase estimation in the full range for squeezed vacuum of $[0,\pi)$, and the
proof of its asymptotic optimality in Appendix \ref{AppendixExtension}. Finally,
Sec.~\ref{sec:Diss} contains the discussion and concluding remarks.

\section{\label{sec:II} Background}

\subsection{\label{sec:QPE} Single parameter estimation in quantum systems}

A fundamental problem in quantum parameter estimation is the design of precise
estimators of an unknown parameter $\theta \in \Theta$ characterizing a quantum
state based on measurements of the system. In this context, a quantum system is
modeled as a Hilbert space $\mathcal{H}$, and its state is described by a
density operator $\rho$, which is a self-adjoint positive operator with unit
trace on $\mathcal{H}$. The process of encoding the unknown parameters into a
probe state $\rho$ is accomplished by a dynamical process, which, when it can be
represented as a unitary transformation $U(\theta)$, yields the state
\begin{equation}
  \label{eq:state_par}
\rho(\theta) = U(\theta)\rho U^{\dagger}(\theta), \quad \theta \in \Theta.
\end{equation}

Estimation of $\theta$ can be achieved through an estimator that is a function
that takes a sample of size $N$ from a measurement of the quantum system as an
input and produces an estimate of the unknown parameter. The most general
description of a measurement process is a POVM \cite{Holevo2011}. Given a
quantum system $\mathcal{H}$ and an outcome space $\mathcal{X} \subseteq
\mathbb{R}^{k}$ for a measurement, a POVM is a map $M:\mathcal{B}(\mathcal{X})
\to B(\mathcal{H})$ from the set of events of our random experiment
$\mathcal{B}(\mathcal{X})$ to the space of bounded operators on $\mathcal{H}$,
denoted by $B(\mathcal{H})$, that satisfies the following conditions
\cite{Holevo2011,Chiribella2006,Beneduci2011}:
  \label{eq:POVM}
  \begin{enumerate}
\item[i.] $M(\emptyset) = 0, \quad M(\mathcal{X}) = I$
\item[ii.] $M(B) \geq 0, \quad \forall B \in \mathcal{B}(\mathcal{X})$
\item[iii.] For every family of mutually disjoint events $\left\{ B_n
  \right\}_{n=1}^{\infty} \subset \mathcal{B}(\mathcal{X})$, so that $B_i \cap
  B_j = \emptyset$ $\, \forall i \neq j$, that satisfies
  $\cup_{j=1}^{\infty}B_j = B \in \mathcal{B}(\mathcal{X})$, then $M(B) =
  \sum_{j = 1}^{\infty} M(B_j)$.
\end{enumerate}
In particular, when the state of the system is $\rho(\theta)$, the observed data
$x \in \mathcal{X}$ of a measurement $M$ is an outcome of a random variable $X
\in \mathcal{X}$ distributed according to the density function $f(x \mid \theta;
M)$ (or probability mass function in the case of discrete random variables)
given by the Born’s rule
\begin{equation}
  \label{eq:Born_rule}
  f(x \mid \theta; M) = \textrm{Tr}\left[ M(x) \rho(\theta) \right].
\end{equation}

Thus, any sample from the application of a sequence of $N$ POVMs $M_1,\ldots,
M_N$ in a quantum system is represented as a sequence of $N$ random variables
$\vec{X}_N= X_1,\ldots,X_N $. It follows that any estimator
$\widehat{\theta}\left(X_1,\ldots,X_N\right)$ based on this sample is also a
random variable with expected value
\begin{widetext}
\begin{equation}
  \label{eq:expected_value}
    \textrm{E}_{\theta}\left[ \widehat{\theta}(\vec{X}_N) \right] = \int_{\mathcal{X}^{N}} \widehat{\theta}(x_1,\ldots,x_N) f(x_1,\ldots,x_N \mid \theta; M_1,\ldots,M_N) dx_1 \cdots dx_N.
\end{equation}
\end{widetext}
To find optimal estimators for all $\theta \in \Theta \subset \mathbb{R}$, the
concept of unbiased estimator plays a crucial role. An estimator
$\widehat{\theta}(\vec{X}_N)$ is unbiased if $\mathrm{E}_\theta\left[
  \widehat{\theta}(\vec{X}_N) \right] = \theta$ for all $\theta \in \Theta$
\cite{Keener2010}. The performance of unbiased estimators is characterized by
their variance, which is bounded by the Cramér-Rao bound \cite{Keener2010}. This
bound corresponds to the classical limit of precision for all unbiased
estimators, and is given by the inverse of the Fisher information denoted by
$F_X(\theta)$. Given a sample $X$ produced by a POVM $M$, the Fisher information
\begin{equation}
  \label{eq:Fisher_inf}
  F_X(\theta) = \int_{\mathcal{X}} f(x \mid \theta; M) \left[ \frac{\partial  }{ \partial \theta} \log\left( f(x \mid \theta; M) \right) \right]^2 dx
\end{equation}
quantifies the amount of information about the parameter $\theta$ that can be
extracted from the sample $X$ \cite{Braunstein1994,Nagaoka2005}. The ultimate
limit of precision, dictated by quantum mechanics, is achieved by optimizing
$F_X(\theta)$ over all possible POVMs, resulting in the quantum Fisher
information (QFI). Consequently, the variance of any unbiased estimator is lower
bounded by the inverse of the QFI, referred to as the quantum Cramér-Rao bound
(QCRB) \cite{helstrom1969quantum, Braunstein1994,Nagaoka2005}. The objective in
quantum parameter estimation is to devise estimators and quantum measurement
schemes that attain the QCRB for any value of the parameter $\theta \in \Theta$.

\subsection{\label{sec:Dyne} Optical phase estimation based on dyne-detection}

A central task in optical quantum metrology is the estimation of an unknown
phase $\theta \in [0, 2\pi)$ encoded in a photonic quantum state by the unitary
process $U(\theta) = e^{-i \hat{n} \theta}$, where $\hat{n}$ is the photon
number operator. The standard quantum state probe for optical phase estimation
is the coherent state $\ket{\alpha}\bra{\alpha}, \, \alpha \in \mathbb{C}$, in
which photons exhibit classical correlations \cite{PhysRev.131.2766,
  polino2020photonic}. For this quantum state, the QCRB for any unbiased
estimator $\widehat{\theta}$ and $N$ independent copies of the system is
\cite{Holevo2011}
\begin{equation}
  \label{eq:coh_states}
  \mathrm{Var}[\widehat{\theta}] \geq \frac{1}{4 N  \mathrm{E}\left[ \hat{n} \right]},
\end{equation}
This limit in precision defines the SNL (or the coherent state limit).

To surpass the SNL for optical phase estimation, it is necessary to employ
states with quantum correlations, such as squeezed vacuum states. These states
are defined by the density operator \cite{polino2020photonic}:
\begin{equation}
\label{eq:squeezed}
\rho_r = \ket{0,r}\bra{0,r} = \hat{S}(r) \ket{0}\bra{0} \hat{S}^{\dagger}(r),
\end{equation}
where $\hat{S}(r) = e^{ \frac{1}{2}\left( re^{-\iu \gamma}\hat{a}^2 - re^{\iu \gamma}
  \hat{a}^{\dagger} \right) }$ is the squeezing operator, $r \in \mathbb{R}$,
$\gamma \in [0, 2\pi)$, and $\hat{a}$ and $\hat{a}^{\dagger }$ are the
annihilation and creation bosonic operators, respectively. Through the unitary
transformation $U(\theta)$, the squeezed vacuum state in Eq.~\eqref{eq:squeezed}
results in the parameter-dependent state $\rho(\theta)=U^{\dagger}(\theta)\rho
U(\theta)$. The corresponding QCRB for this state for any unbiased estimator
$\widehat{\theta}$ and $N$ independent copies of the system is given by
\cite{Monras2006}:
\begin{equation}
\label{eq:QCRB_sq}
\mathrm{Var}[\widehat{\theta}] \geq \frac{1}{2 N \sinh^2(2r)} = \frac{1}{8 N \left( \mathrm{E}\left[ \hat{n} \right]^2 + \mathrm{E}\left[ \hat{n} \right] \right)},
\end{equation}
where $r$ represents the squeezing strength of $\hat{S}(r)$
\cite{PhysRevA.79.033834,PhysRevResearch.3.023222}. This bound exhibits a
superior scaling with $\mathrm{E}\left[ \hat{n} \right]$ compared to the
coherent state in Eq.~\eqref{eq:coh_states}. However, it is worth noting that
due to the $\pi$-inversion symmetry inherent to squeezed vacuum states
\cite{zeytinouglu2017engineering} (see Fig.~\ref{fig:scheme}-\textbf{i}), any
estimation strategy based on these states is constrained to phases within the
range of $[0, \pi)$.

\begin{figure*}[h!t]
  \centering
  \includegraphics[scale = 0.56]{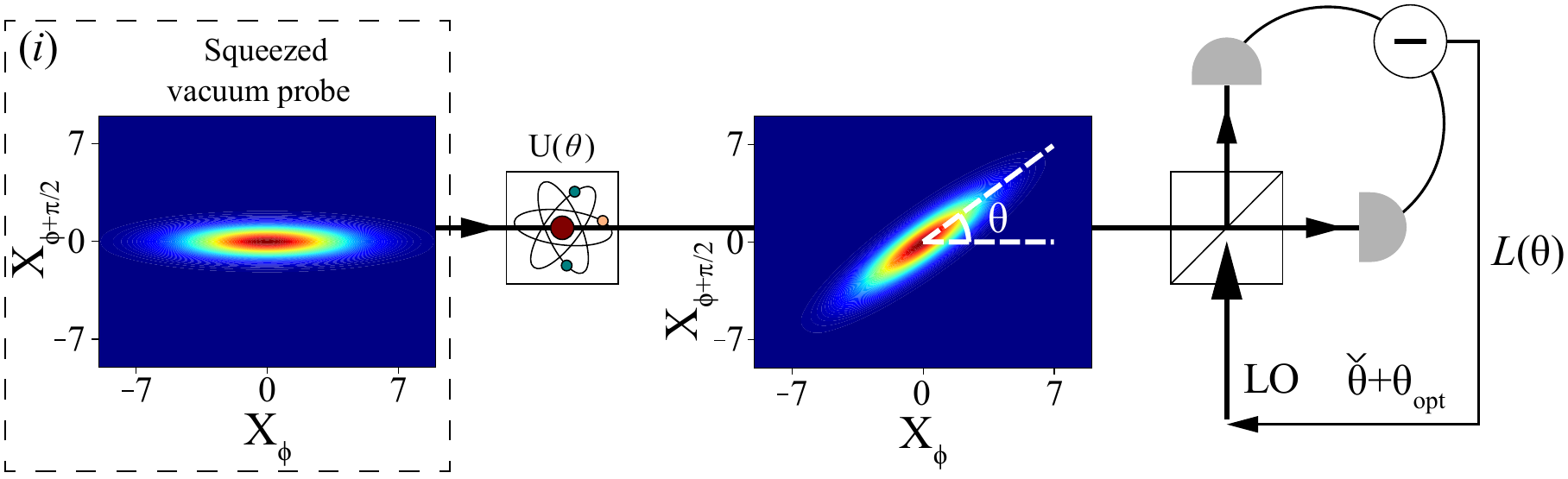}
  \caption{\label{fig:scheme} Adaptive estimation strategy for optical phase
    estimation with squeezed vacuum probe states $\lvert 0, r\rangle \langle 0,
    r \rvert$. The strategy employs locally optimal POVMs, $M_{\check{\theta}}$
    in Eq. (\ref{eq:set_of_lues}) with $\check{\theta} \in [0, \pi/2)$, to
    produce a maximum likelihood estimate of $\theta$, and updates the value
    $\check{\theta}$ for subsequent adaptive steps. The measurement process is
    iteratively repeated during the adaptive strategy. Inset
    \textbf{(\textbf{$i$})} shows the Husimi \textit{Q} representation for the
    initial squeezed vacuum state. Note that due to the internet symmetry
    properties of squeezed vacuum, these quantum probes can only encode the
    phase modulo $\pi$.}
\end{figure*}

The standard measurement approach for optical phase estimation is the heterodyne
measurement \cite{RevModPhys.84.621, Tyc2004}. This measurement involves
simultaneous sampling of two orthogonal components of the electromagnetic field
within the complex plane, namely $\hat{X}_{\phi}$ and $\hat{X}_{\phi+\pi/2}$, by
utilizing the quadrature decomposition of the input field
\cite{Wiseman2009_book}. Here, the quadrature operator $\hat{X}_{\phi}$ is
defined as:
\begin{equation}
  \label{eq:Quadrature_operator}
  \hat{X}_{\phi} = \frac{\hat{a}^{\dagger} e^{i\phi} + \hat{a} e^{-i\phi}}{2}.
\end{equation}
The POVM associated with the heterodyne measurement is described by coherent
state projectors $M_{\rm{Het}} = \left\{ \pi^{-1} \ket{z}\bra{z} : z \in
  \mathbb{C} \right\}$, with outcomes corresponding to complex numbers, and with
the corresponding Fisher information:
\begin{equation}
  \label{eq:FI_heterodyne}
  F_Z(\theta) = 4\sinh^{2}\left( r \right).
\end{equation}
The inverse of Eq.~\eqref{eq:FI_heterodyne} is known as the heterodyne limit for
the precision of any unbiased estimator $\widehat{\theta}$ for all $\theta \in
\left[ 0, \pi \right)$ with squeezed vacuum states.

Going beyond estimation strategies based on heterodyne detection, homodyne
detection can surpass the heterodyne limit for a suitable set of values of
$\theta$. Homodyne provides information about the quadrature $\hat{X}_{\phi}$ of
the input signal using a local oscillator (LO) phase reference field and
interference \cite{Wiseman2009_book}. Specifically, in the limit of strong LO,
Eq.~\eqref{eq:Quadrature_operator} represents a self-adjoint operator with a
spectral measure given by \cite{RevModPhys.84.621}:
\begin{equation}
  \label{eq:Spectral}
  \hat{X}_{\phi} = \int_{-\infty}^{\infty} x \Pi(dx),
\end{equation}
where $\Pi(B) = 1_{B}(x)$ (or symbolically in Dirac notation, $\Pi(dx) =
\ket{x}\bra{x} dx$), for any $B \in \mathcal{B}\left( \mathbb{R} \right)$.
Consequently, the homodyne measurement can be described by the POVM
$M_{\rm{Hom}} = \left\{ \Pi(dx) \right\}$, with the outcome space being the real
numbers \cite{d2003quantum}.

For squeezed vacuum state probes in Eq.~\eqref{eq:squeezed}, the outcomes $x \in \mathbb{R}$ of the homodyne measurement are distributed according to a normal random variable with probability density function
\begin{equation}
  \label{eq:pdf_homodyne}
  f(x \mid \theta) = \frac{1}{\sqrt{2 \pi \sigma^2(\theta)}}\exp\left[ -\frac{x^2}{2 \sigma^2(\theta)} \right],
\end{equation}
where $\theta$ denotes the unknown phase and
\begin{equation}
  \label{eq:var_homodyne}
  \sigma^2(\theta) = \left[ e^{-2r} \cos^2(\theta) + e^{2r}\sin^2(\theta)
  \right]
\end{equation}
denotes the variance. The Fisher information $F_X(\theta)$ for the
homodyne measurement is then
\begin{equation}
  \label{eq:Fisher_Homodyne}
  \begin{split}
    F_X(\theta) = \frac{ 2 \sinh^2(2r) \sin^2(2 \theta) }{ \left(  \sigma^2(\theta)\right)^2 }.
  \end{split}
\end{equation}
Notably, the classical Fisher information of the homodyne measurement coincides
with the QFI when the squeezing strength $r$ satisfies:
\begin{equation}
  \label{eq:opt_squeez}
  r = -\frac{1}{2}\log(\tan(\theta)),
\end{equation}
or equivalently, when the parameter $\theta$ corresponds to the optimal value
$\theta_{\mathrm{opt}}$ given by
\begin{equation}
  \label{eq:theta_opt}
  \theta_{\mathrm{opt}} = \frac{\arccos\left( \tanh(2r) \right)}{2}
\end{equation}
which tends to zero as $r$ increases. Consequently, the homodyne measurement can
surpass the heterodyne limit in the neighborhood of $\theta_{\mathrm{opt}}$.
However, outside this neighborhood, estimators based on a sample $\vec{X}_N$
obtained from $N$ independent and identical homodyne measurements cannot achieve
this optimal level of precision (see Fig. \ref{fig:performance_2}).

To overcome this limitation, adaptive estimation protocols have been proposed.
One such protocol \cite{Monras2006, Olivares2009, Berni2015}, that we refer to
as the two-step protocol, considers a reduced parameter space $[0, \pi/2)$, and
utilizes homodyne detection and one subsequent adaptation of the probe state to
surpass the heterodyne limit within this phase range. This strategy approaches
the QCRB in the asymptotic limit for phases around the optimal phase
$\theta_{\mathrm{opt}}$ \cite{Olivares2009, Berni2015}. However, far from this
optimal phase, its performance significantly deviates from the QCRB. Moreover,
when considering the full range of phases that can be encoded in squeezed vacuum
probes $[0,\pi)$, this two-step estimation strategy is not expected to produce
satisfactory results, due to the periodicity of the likelihood function from
homodyne outcomes.

In a more general measurement setting, Gaussian \cite{RevModPhys.84.621,Oh2019}
and generalized dyne measurements \cite{Genoni2014, PhysRevResearch.2.023030},
which extend the concepts of homodyne and heterodyne, have a large potential for
quantum metrology \cite{Oh2019}, sensing \cite{Lee2021}, and communications
\cite{Holevo2021}, and for studying the dynamical evolution of quantum systems
under continuous measurements \cite{Genoni2014, PhysRevLett.95.030402}.
Moreover, the combination of optimal control, quantum feedback, and Gaussian
measurements allows for implementations of optimal phase measurements for single
qubits, even across the full range of phases \cite{Wiseman1996, Martin2020}.
However, optimal phase measurements based on quantum feedback and optimal
control for quantum correlated states, such as squeezed states, are still
unknown, but are expected to be highly complex in practice.

In this work, we propose an adaptive estimation strategy based on homodyne
detection, that leverages the framework of AQSE, which, under certain regularity
conditions (see Appendix:~\ref{sec:MLE}), yields a consistent and efficient
estimator for any phase $\theta \in [0,\pi/2)$. A key element of this approach
is to use samples that lead concave likelihood functions, ensuring the asymptotic
normality of the Maximum Likelihood Estimator (MLE) \cite{Nagaoka2005,
  Fujiwara2006}. This property guarantees the asymptotic saturation of the QCRB
in Eq.~\eqref{eq:QCRB_sq} for any $\theta \in \left[ 0, \pi/2 \right)$. We
further generalize this adaptive strategy to incorporate heterodyne measurements
enabling phase estimation within $[0,\pi)$, while maintaining a quantum-optimal
performance in the asymptotic limit.

\section{\label{sec:Op_dyne} Optimal adaptive homodyne phase estimation}

In practice, it is generally impossible to find a POVM and an unbiased estimator
capable of saturating the QCRB for all $\theta \in \Theta$. However, it is often
possible to find a POVM and an unbiased estimator that can achieve this bound
for a specific value of the parameter within a neighborhood around a point
$\theta_0 \in \Theta$. These types of POVMs are referred to as locally optimal
at $\theta_0$. Moreover, if it is possible to construct a collection of such
locally optimal POVMs for any $\theta \in \Theta$ while satisfying a set of
regularity conditions pertaining to the probability distributions of their
outcomes (see Appendix:~\ref{sec:MLE}), then it is possible to use an adaptive
estimation method, known as AQSE \cite{Fujiwara2006}, capable of saturating the
QCRB in the asymptotic limit for the MLE $\forall \theta \in \Theta$.

Building upon this understanding, the proposed adaptive phase estimation
strategy with squeezed vacuum states is constructed based on two elements. The
first element involves the construction of a set of POVMs through homodyne
measurements that are locally optimal for any value of the parameter, that is
the optical phase $\theta$ within the range $\theta \in [0, \pi/2) = \Theta$.
The second element is the construction of an estimator that achieves the QCRB
for any given value of the phase $\theta$, while also being locally unbiased at
the true phase $\theta_0 \in \Theta$ \cite{Holevo2011, Fujiwara2006}.

To construct the set of locally optimal POVMs, we refer to
Eq.~\eqref{eq:theta_opt}, which shows that the homodyne POVM $M_{\mathrm{Hom}} =
\left\{ \Pi(dx) \right\}_{x \in \mathbb{R}}$ in Eq.~\eqref{eq:Spectral} is
locally optimal at $\theta_{\mathrm{opt}}$. This is because the samples obtained
from $M_{\mathrm{Hom}}$ have a Fisher information equal to the QFI at this
specific phase. Furthermore, given the asymptotic unbiasedness of the MLE, and
its subsequent local unbiasedness, the estimator effectively saturates the QCRB
(Eq.~\eqref{eq:QCRB_sq}) in the asymptotic limit at $\theta_{\mathrm{opt}}$.
Consequently, by appropriately incorporating a phase shift into the elements of
$M_{\mathrm{Hom}}$, it is possible to construct a set of locally optimal POVMs
for any phase $\theta \in [0, \pi/2)$.

To this end, we introduce a phase shift $U\left(\check{\theta} -
  \theta_{\mathrm{opt}}\right)$, which allows us to define a new set of POVMs
for each $\check{\theta} \in [0, \pi/2)$ as follows:
\begin{equation}
  \label{eq:set_of_lues}
	M_{\check{\theta}}(dx) = \left\{  U\left( \check{\theta} - \theta_{\mathrm{opt}} \right)\Pi(dx) U^{\dagger}\left( \check{\theta}- \theta_{\mathrm{opt}} \right) \right\}.
\end{equation}

Here $M_{\check{\theta}}(dx)$ denotes the POVM elements obtained by applying the
phase shift to the original POVM elements $\Pi(dx)$ of $M_{\mathrm{Hom}}$. Note
that the phase distribution of $M_{\check{\theta}}(dx)$ over the state
$\rho(\theta)$ becomes:
\begin{align}
  \label{eq:lue_homodyne}
 \resizebox{0.22 \hsize}{!}{$ f(x \mid \theta; M_{\check{\theta}} )$} &\resizebox{0.78 \hsize}{!}{$
  	= \textrm{Tr}\left[ U\left( \check{\theta} - \theta_{\mathrm{opt}} \right)\ket{x}\bra{x} U^{\dagger}\left( \check{\theta}- \theta_{\mathrm{opt}} \right) \rho(\theta)   \right] \nonumber  $}\\
                                          &= f(x \mid \theta  + \theta_{\mathrm{opt}} -\check{\theta}; M_{\mathrm{Hom}} ).
\end{align}
Evaluated at $\check{\theta} = \theta$, this distribution is the same as the
distribution for the outcomes of the POVM $M_{\mathrm{Hom}}$ at
$\theta_{\mathrm{opt}}$, which shows that the POVM $M_{\check{\theta}}(dx)$ is
locally optimal at $\theta$. From this observation and based on the AQSE
framework, then it is possible to construct a multi-step estimation strategy
based on adaptive homodyne measurements, which provide a set of locally optimal
measurements, for which the distribution of the sequence of estimators converges
to a normal distribution with variance equal to the inverse of the QFI (see
Appendix~\ref{Appendix_Proof}).

Figure~\ref{fig:scheme} shows the concept of the proposed phase estimation
strategy based on adaptive homodyne measurements. For a set of $N$ input
squeezed probe states $\{\ket{0,r}\bra{0,r}\}$, a phaseshift $U(\theta)$ encodes
the parameter $\theta$ in the probes. The adaptive strategy then implements the
POVM $M_{\check{\theta}_1}$ with an initial (random) guess $\check{\theta}_1 \in
[0, \pi/2)$ over $\nu = N / m$ of these states, yielding a measurement sample
$\vec{X}_{\nu}(\check{\theta}_1) = X_1(\check{\theta}_1), \ldots,
X_\nu(\check{\theta}_1)$. The MLE applied to $\vec{X}_{\nu}(\check{\theta}_1)$,
$\widehat{\theta}_{\mathrm{MLE}}\left( \vec{X}_{\nu}(\check{\theta}_1) \right)$,
results in an estimate $\check{\theta}_2=\widehat{\theta}_{\mathrm{MLE}}\left(
  \vec{x}_{\nu}(\check{\theta}_1), \right)$ of $\theta$ for the firt adaptive
step. This estimate $\check{\theta}_2$ then becomes the best guess for the
subsequent adaptive step, and the process is repeated $m$ times iteratively
during the strategy.

\subsection{\label{sec:Selection_of_MLE} Estimator}

\begin{figure*}[h!t]
  \centering \includegraphics[scale = 0.295]{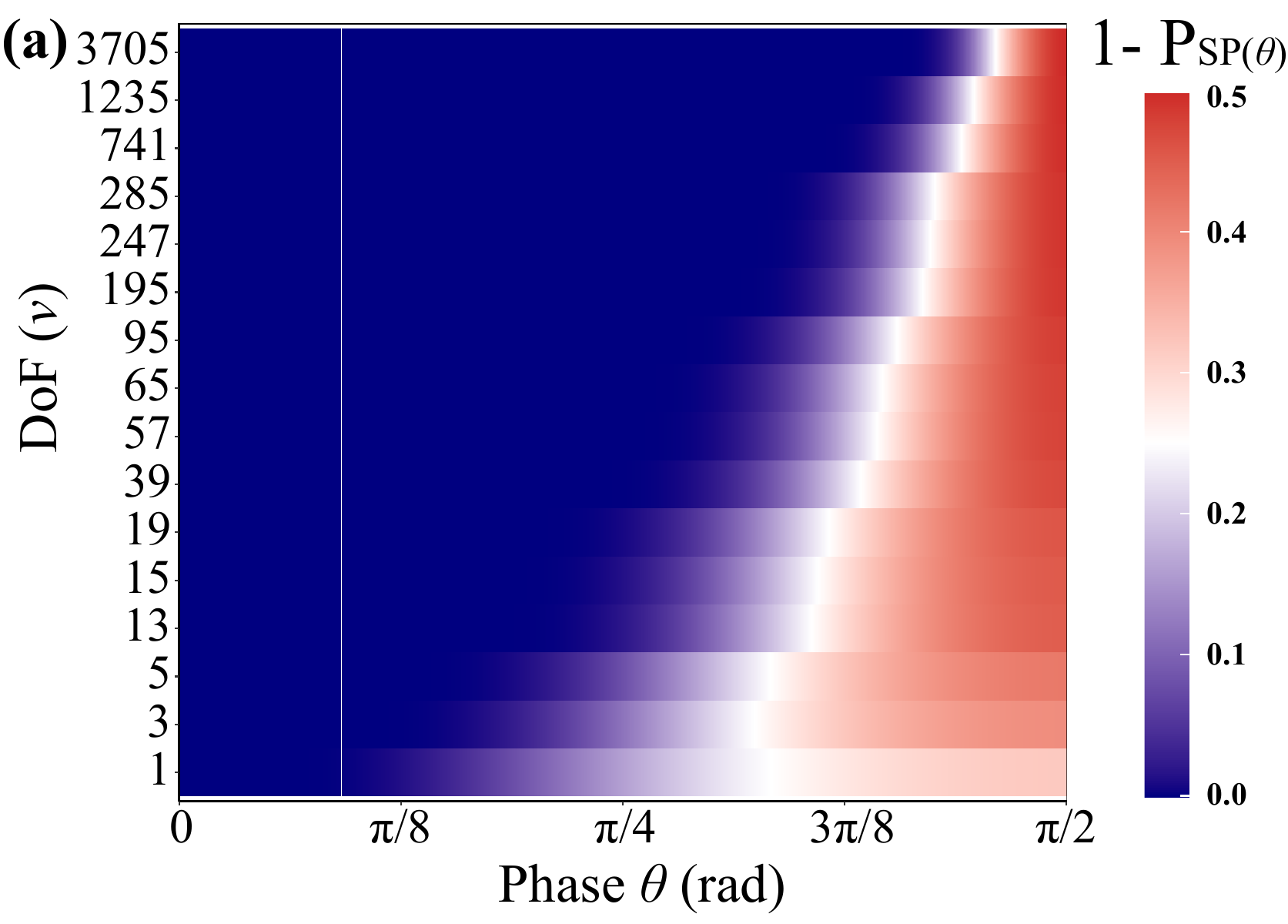}
  \includegraphics[scale = 0.295]{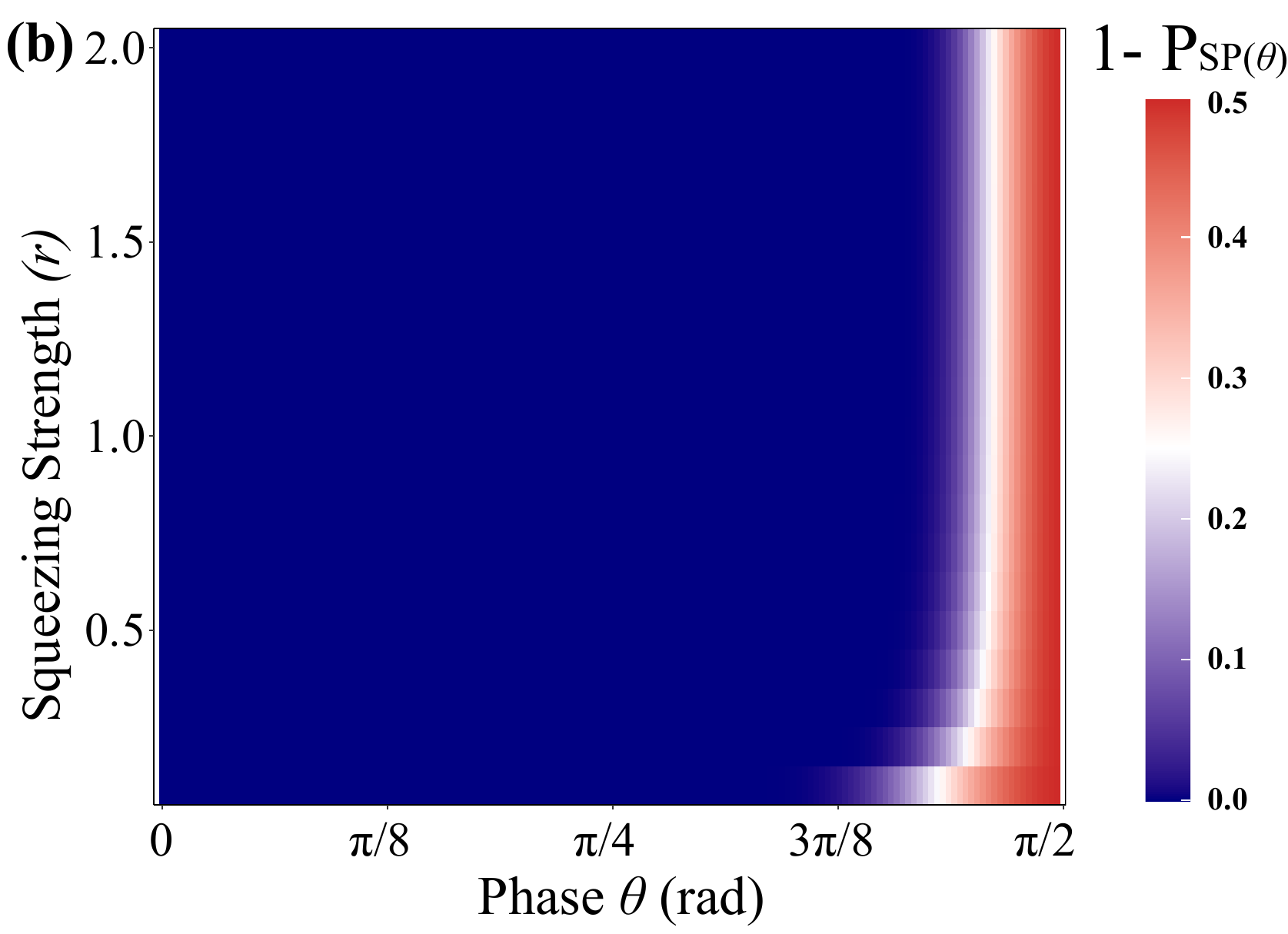}
  \caption{\label{fig:probas} Probability that the likelihood function does not
    have its global maximum at a stationary point $1-P_{SP}(\theta)$ within
    $\left[ 0,\pi/2 \right)$ from Eq.~\eqref{eq:prob_stationary_point}.
    \textbf{(a)} Probability $1-P_{SP}(\theta)$ for different degrees of freedom
    (DoF) $\nu$ (number of probes in each adaptive step) with $r = 1$.
    \textbf{(b)} Probability $1-P_{SP}(\theta)$ for different values of
    squeezing strength $r$ with $\nu = 3705$. By ensuring that the global
    maximum of the likelihood function is attained at a stationary point within
    the interval $[0, \pi/2)$, we mitigate the bias introduced by boundary
    estimates and enhance the performance of the adaptive estimation process
    (See main text for details).}
\end{figure*}

A key aspect of the parameter estimation strategy is the selection of an
estimator that allows for the saturation of the QCRB. We identify the necessary
conditions to ensure the asymptotic consistency and normality of the MLE and the
saturation of the QCRB through the adaptive strategy for any $\theta \in \left[
  0, \pi/2 \right)$. Assuming that the Fisher information is different from zero
for any $\theta$ \footnote{This assumption holds except in the first adaptive
  step if $\check{\theta}_1 = \theta + \theta_{\text{opt}}$. However, in our
  adaptive strategy, $\check{\theta}_{1}$ is chosen at random within
  $[0,\pi/2)$. Therefore the probability of observing this event is zero.}, the
MLE saturates the QCRB given that:

\begin{itemize}
\item[(a)] the MLE is a single-valued function;
\item[(b)] the MLE is obtained as a stationary point of the likelihood function;
  and
\item[(c)] the derivatives of the likelihood function at $\theta$ exist up to a
  high-enough order so that they can be effectively approximated using a Taylor
  series \cite{Makelainen1981, Fujiwara2006}.
\end{itemize}

We observe that (a) is automatically satisfied, since the measurement outcomes
from the proposed strategy follow a normal distribution, as shown
Eq.~(\ref{eq:pdf_homodyne}). This condition ensures the sufficient smoothness of
the likelihood function for any $\theta \in [0, \pi/2)$ in each adaptive step.
Moreover, by restricting the parameter space to the interval $\left[0,
  \pi/2\right)$, this condition guarantees the uniqueness of the MLE for any
$\theta \in [0, \pi/2)$. Thus, the remaining task is to determine the conditions
under which the MLE corresponds to a stationary point of the likelihood
function, leading to its asymptotic normality. To this end, we first analyze the
MLE for a sample from the first adaptive step.

Given a sample $\vec{X}_{\nu}(\check{\theta}_1) =\left( X_1,X_2,\ldots,X_\nu
\right)$ of size $\nu$ from the POVM $M_{\check{\theta}_1}$ at the first
adaptive step, and evaluating the likelihood from Eq.~\eqref{eq:pdf_homodyne} at
$\theta_{*} = \left( \theta + \theta_{\mathrm{opt}} -\check{\theta}_1 \right)
\,$ $\mathrm{modulo} \, $ $\pi/2$, we obtain the MLE for $\theta$ as:
\begin{equation}
  \label{eq:mle_1}
  \resizebox{.98\hsize}{!}{$\widehat{\theta}_{\mathrm{MLE}}( \vec{X}_{\nu} ) = \arccos\left[  \frac{e^r \sqrt{e^{2r} - \frac{1}{\nu}\sum_{i=1}^{\nu} X_i^2   }}{\sqrt{e^{4r}-1}} \right] - \theta_{\mathrm{opt}} + \check{\theta}_1.$}
\end{equation}
The set of homodyne outcomes for which $e^{2r} - \frac{1}{\nu}\sum_{i=1}^{\nu} X_i^2 > 0$ in  Eq.~(\ref{eq:mle_1}), which results in a real-valued $\widehat{\theta}_{\mathrm{MLE}}( \vec{X}_{\nu} )$, yields a MLE corresponding to a stationary point of the likelihood within $\left[ 0,\pi/2 \right)$. Therefore, the probability of obtaining a non-real solution
of Eq.~\eqref{eq:mle_1} at $\check{\theta}_1 = \theta_{\mathrm{opt}}$ is:
\begin{widetext}
\begin{equation}
  \label{eq:prob_stationary_point}
  \begin{split}
    P\left( e^{2r} < \frac{1}{\nu}\sum_{i=1}^{\nu} X_i^2 \mid \check{\theta}_1 = \theta_{\mathrm{opt}}   \right)  &= P\left( \sum_{i=1}^{\nu} \frac{X_i^2}{\sigma^2(\theta)} > e^{2r}\left( \frac{\nu}{\sigma^2(\theta)} \right)    \right)\\
                                                                                                                    &= 1- P\left( \sum_{i=1}^{\nu} \frac{X_i^2}{\sigma^2(\theta)} \leq e^{2r}\left( \frac{\nu}{\sigma^2(\theta)} \right) \right)\\
                                                                                                                    &=1-P_{SP}(\theta).
    \end{split}
  \end{equation}
\end{widetext}
Here, $P_{SP}(\theta)$ is the probability that the likelihood function has its global maximum at
a stationary point at $\theta$. We note that $Q = \sum_{i=1}^{\nu} \frac{X_i^2}{\sigma^2(\theta)}$ is the sum of squares of $\nu$ independent standard normal random variables and follows a chi-squared
distribution with $\nu$ degrees of freedom ($Q \sim \chi^2(\nu)$)
\cite{Keener2010}.

Figure~\ref{fig:probas} shows the probability of obtaining a non-real solution
of Eq.~\eqref{eq:mle_1} as a function of $\theta$ for different values of the
squeezing strength $r$ and sample size $\nu$. We observe that as $\theta$
deviates from the optimal value $\theta_{\mathrm{opt}}$ in
Eq.~\eqref{eq:theta_opt}, the probability that the estimates were not obtained
from a stationary point in the likelihood function within the interval $\left[
  0,\pi/2 \right)$ becomes different than zero. We also note that the region in
which the global maximum of the likelihood is not reached at a stationary point
decreases as we increase the degrees of freedom $\nu$ (sample size in the
adaptive step) or the squeezing strength $r$. Moreover, when $e^{2r} - 1/ \nu
\sum_{i=1}^{\nu} X_i^2 < 0$ (regions where the global maximum does not
correspond to a stationary point) the MLE in Eq.~(\ref{eq:mle_1}) corresponds to
the boundary point $\pi/2$. This event introduces a bias in the estimate for
subsequent adaptive steps, leading to a decrease in the precision of the final
estimate. This effect in the estimate becomes more detrimental when the phase to
be estimated $\theta_{0}$ is close to the boundary point $\pi/2$ (see
Figure~\ref{fig:probas}) as can be observed in the previous two-steps protocols
\cite{Berni2015}. We can attribute this issue to the Fisher information being
zero at $\theta = \pi/2$ (or $\theta = 0$). As a result, the information that
can be obtained about the phase at these boundary points vanishes.

The proposed strategy address this problem by first making a random guess
$\check{\theta}_1\in[0,\pi/2)$ in the first adaptive step. This initial random
guess reduces in average the probability that the MLE does not arise from a
stationary point within $[0,\pi/2)$ in Eq.~(\ref{eq:prob_stationary_point}).
According to the law of total probability
\begin{widetext}
\begin{equation}
  \label{eq:prob_ht}
  \begin{split}
    P\left( e^{2r} < \frac{1}{\nu}\sum_{i=1}^{\nu} X_i^2   \right) &=  \mathrm{E}_{\hat{\theta}_0}\left[   P\left( e^{2r} < \frac{1}{\nu}\sum_{i=1}^{\nu} X_i^2  \mid \check{\theta}_1  \right) \right]\\
    &= \int_{0}^{\pi/2} d\check{\theta}_1 P\left( \sum_{i=1}^{\nu} \frac{X_i^2}{\sigma^2\left(  \theta_{*}  \right)} > e^{2r}\left( \frac{\nu}{\sigma^2\left(\theta_{*} \right)} \right)    \right)\\
    &= \int_{0}^{\pi/2} d{\check\theta}_1 P_{SP}(\theta_{*}).
\end{split}
\end{equation}
\end{widetext}
Moreover, this probability decreases as the sample size $\nu$ increases, as
shown in Fig.~\ref{fig:probas_2}. As a final step, to guarantee that the MLE
always arises from stationary points, we introduce a modified estimator
\begin{equation}
  \label{eq:estimator_proposal}
  \widehat{\theta}^{\textrm{U}}_{\textrm{MLE}}(\vec{X}_{\nu}) = \begin{cases} \widehat{\theta}_{\textrm{MLE}}(\vec{X}_{\nu}) &\text{if } e^{2r} - 1/ \nu \sum_{i=1}^{\nu} X_i^2 \geq 0, \\ \widehat{\theta}_{\textrm{MLE}}(\vec{X}_{s})  &\text{otherwise}, \end{cases}
\end{equation}
where $\vec{X}_{s}$ is a subsequence of $\vec{X}_{\nu}$ constructed by
iteratively removing the highest values of $\vec{X}_\nu$ in descending order
until the condition $e^{2r} - 1/ s \sum_{i=1}^{s} X_i^2 \geq 0$ is satisfied. We
note that while the estimator ignores a few measurement outcomes from the
sample, the probability of this event happening is low, and this modified
estimator greatly improves the final variance of the estimates. Moreover, it is
worth noting that the modified estimator in Eq.~(\ref{eq:estimator_proposal}) is
only necessary in the first adaptive step. Once an estimate $\check{\theta}$
sufficiently close to the true value $\theta$ is obtained, the subsequent
adaptive steps will result in samples close to $\theta_{\mathrm{opt}}$.
Therefore, for sufficiently large values of $r$, and moderate $\nu$, this
procedure makes the probability in Eq.~\eqref{eq:prob_stationary_point} tend to
zero, guaranteeing the asymptotic efficiency of the MLE from the adaptive
strategy.
  \begin{figure}[h!t]
  \centering
  \includegraphics[scale = 0.33]{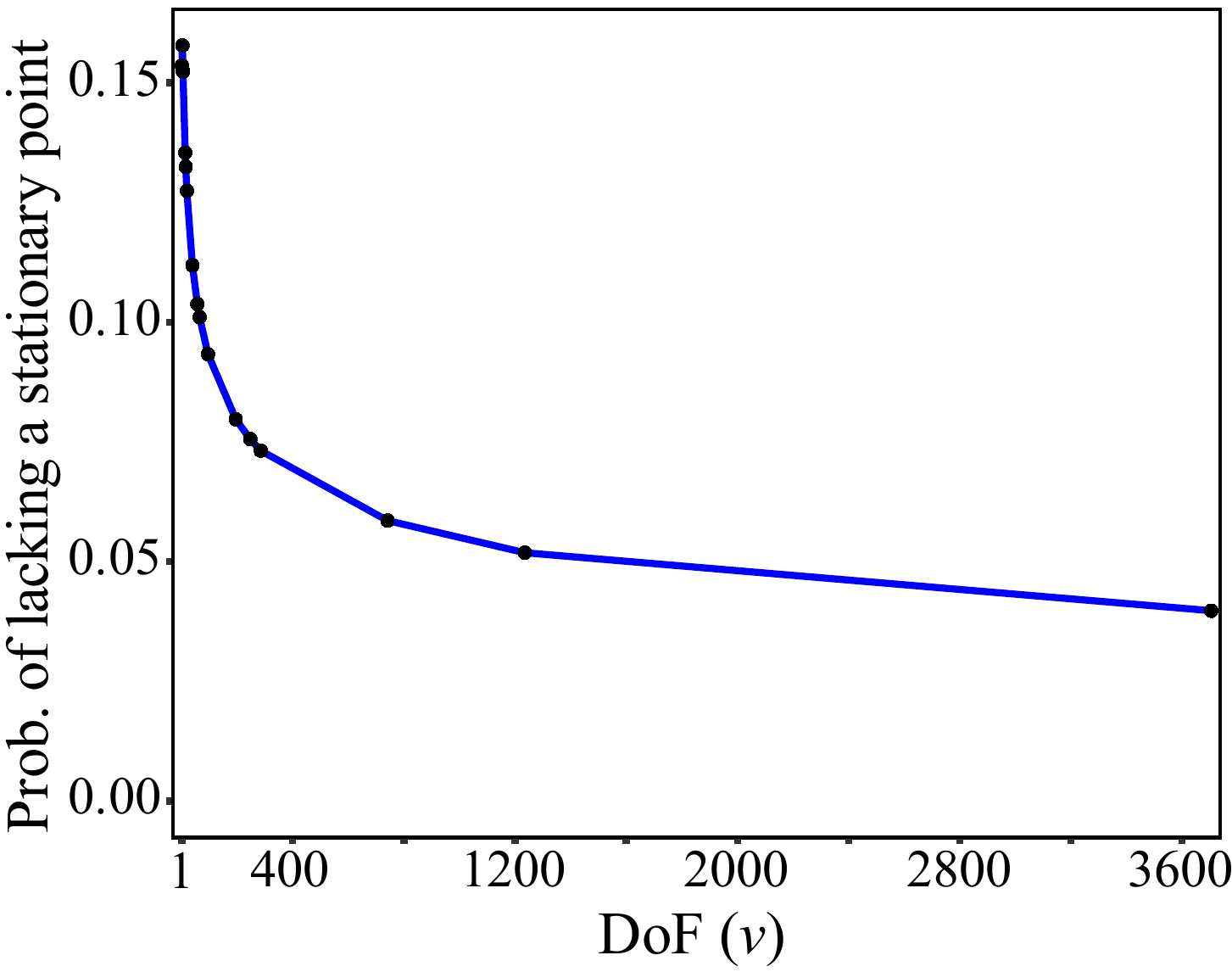}
  \caption{\label{fig:probas_2} Conditional probability (Eq.~\eqref{eq:prob_ht})
    that the likelihood function does not have its global maximum at a
    stationary point within $\left[ 0,\pi/2 \right)$, given $\check{\theta}_1$
    uniformly distributed from $0$ to $\pi/2$. The black points show numerical evaluations of (Eq.~\eqref{eq:prob_ht}) for several $\nu$, and the blue line is a guide for the eye showing the trend as a function of $\nu$. This probability is upper bounded
    for $\nu = 1$, and decreases as $\nu$ increases. The parameter $\nu$
    corresponds to the degrees of freedom (DoF) for $Q = \sum_{i=1}^{\nu}
    \frac{X_i^2}{\sigma^2(\theta)}$, which follows a chi-square distribution.}
\end{figure}

\subsection{\label{sec:Selection_of_MLE} Proof of optimality of the adaptive estimation strategy}

Appendix~\ref{Appendix_Proof} gives a mathematical proof of the optimality of the
proposed estimation strategy based on adaptive homodyne detection. The proof
first establishes that by satisfying the regularity conditions outlined in
Appendix \ref{sec:MLE}, the probability of obtaining an estimate far from the
true value $\theta_0$ decreases exponentially to zero as the number of adaptive
steps increases. This result demonstrates the \textit{almost sure convergence}
of the MLE to $\theta_0$ (asymptotic consistency) \cite{Keener2010}.
Subsequently, using the asymptotic consistency of the estimator, we prove that
the limiting distribution of the MLE, as the number of adaptive steps tends to
infinity, is a normal distribution with mean $\theta_0$ and variance equal to
the inverse of the QFI (asymptotic normality). This demonstrates the convergence
of the proposed adaptive strategy to the QCRB.

\section{\label{sec:Performance}Performance of the adaptive strategy}

We evaluate the performance of the adaptive estimation strategy based on locally
optimal POVMs, Eq.~(\ref{eq:set_of_lues}), with the modified estimator in
Eq.~(\ref{eq:estimator_proposal}). We conduct Monte Carlo simulations varying
the sample size $\nu$, number of adaptive steps $m$, and squeezing strength $r$.
We investigate the precision and efficiency of the estimation strategy and its
convergence towards the QCRB in Eq.~(\ref{eq:QCRB_sq}) for $\theta \in [0,
\pi/2)$. Considering that the estimator
$\widehat{\theta}\left(\vec{X}_\nu(\check{\theta}_1)
  ,\ldots,\vec{X}_\nu(\check{\theta}_m)\right)$ has a periodic distribution with
period $\pi/2$, we evaluate the precision of
$\widehat{\theta}\left(\vec{X}_\nu(\check{\theta}_1)
  ,\ldots,\vec{X}_\nu(\check{\theta}_m)\right)$ by using the corresponding
Holevo variance \cite{Holevo2011, Berry2000}:
\begin{widetext}
\begin{equation}
  \label{eq:holevo_var}
  \mathrm{Var}_{\theta}\left[ \widehat{\theta}\left(\vec{X}_\nu(\check{\theta}_1)
      ,\ldots,\vec{X}_\nu(\check{\theta}_m)\right) \right] = \frac{\left[\mathrm{E}\left[\cos\left(\frac{2\pi}{P}\left( \widehat{\theta}\left( \vec{X}_\nu(\check{\theta}_1)
              ,\ldots, \vec{X}_\nu(\check{\theta}_m)\right) -\theta\right)\right)\right]\right]^{-2}-1}{\left(\frac{2\pi}{P}\right)^2},
\end{equation}
\end{widetext}
where the factor $P$ represents the period of the estimator's distribution, in
our case $P = \pi/2$.

Figure~\ref{fig:performance} shows the results for the Holevo variance for
$\widehat{\theta}\left(\vec{X}_\nu(\check{\theta}_1) ,\ldots,
  \vec{X}_\nu(\check{\theta}_m)\right)$ within $\theta \in [0, \pi/2)$ for
adaptive estimation strategies based on homodyne detection and squeezed vacuum
for different numbers of adaptive steps $m$ from $3$ to $15$. For all these
cases, we consider a strategy with a total number of copies of the probe state
$N = 3705$ with squeezing strength $r=1.01$, so that each adaptive step contains
$\nu=N /m$ probe states (These parameters were chosen for easy comparison with
previous works). The results shown in Figure~\ref{fig:performance} are obtained
from the average of five Monte Carlo simulations, each with $1\times10^4$ runs
of the strategy. The homodyne measurement without feedback and the two-step
adaptive homodyne strategy from Ref. \cite{Berni2015} are shown for comparison.
All the results have been normalized to the QCRB $=QFI^{-1}$. For these
simulations, sampling process employed the method of rejection sampling
\cite{robert1999monte}, while the optimization employed the method of
generalized simulated annealing over the interval $[0, \pi/2]$
\cite{xiang2013generalized}.

We observe that the proposed multi-step adaptive estimation strategy based on
homodyne detection consistently outperforms the non-adaptive and the two-step
homodyne strategies \cite{Berni2015}. Moreover, as can be observed in
Figure~\ref{fig:performance} (a) and the zoom in Figure~\ref{fig:performance}
(b), this adaptive homodyne strategy progressively approaches the QCRB (dashed
horizontal line), and is expected to saturate this bound for all values of
$\theta \in [0, \pi/2)$ in the limit of many adaptive steps. For instance, the
proposed adaptive estimation strategy with $m=15$ adaptive steps achieves a
precision of just $7\%$ above the QCRB for phases $\theta \in [0, \pi/2)$ on
average, compared to $42\%$ with two steps \cite{Berni2015}. We further note
that while the two-step strategy in \cite{Berni2015} shows a smaller variance
for $\theta \approx \theta_{\text{opt}}$ compared to the proposed strategy with
small $m = 3, 5$, its performance far from $\theta_{\mathrm{opt}}$ deviates
significantly from the QCRB, as can be seen in Figure~\ref{fig:performance} (b)
(see also Appendix~\ref{Appendix3}). Moreover, as discussed in the proof in
Appendix~\ref{Appendix_Proof}, by construction the proposed strategy ensures an
asymptotic quantum-limited performance. These results highlight the fundamental
advantage of this multi-step adaptive strategy for parameter estimation, and
underscore its potential for optical phase estimation and quantum metrology.

\begin{figure}[h!t]
  \centering \includegraphics[scale=0.32]{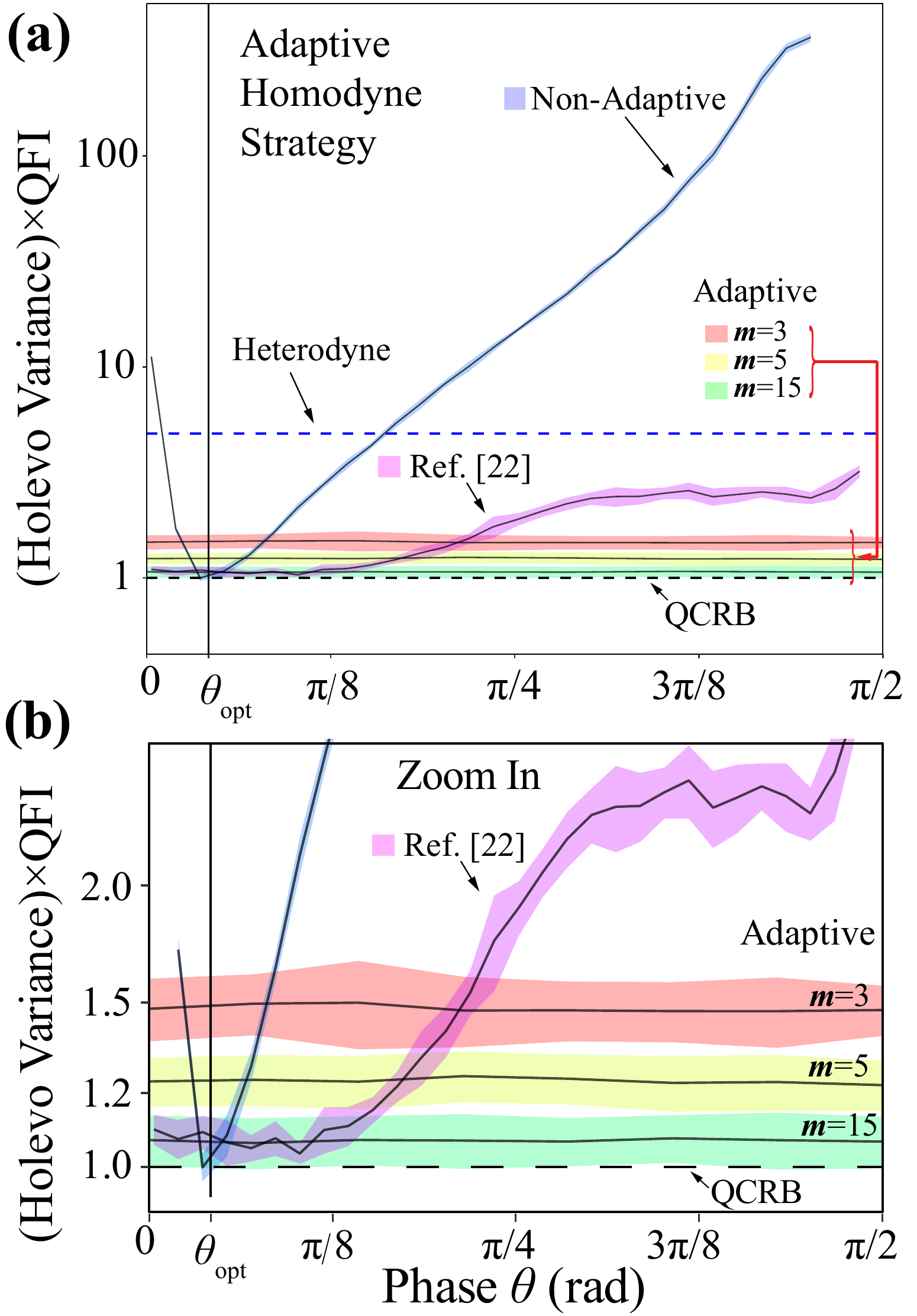}
  \caption{\label{fig:performance} \textbf{(a)} Holevo variance of the adaptive
    estimation strategy based on the AQSE formalism as a function of $\theta$,
    for different numbers of adaptive steps $m$, with $N=3705$ total independent
    copies of the probe state with squeezing strength of $r=1.01$. The y-axis
    shows the Holevo variance in logarithmic scale normalized to the inverse of
    the QFI in Eq.~(\ref{eq:QCRB_sq}), corresponding to the QCRB. The magenta
    line shows the two-step adaptive estimation scheme from Ref.
    \cite{Berni2015}, and the blue curve shows the estimation with homodyne
    detection without feedback. The shaded regions represent a one standard
    deviation. \textbf{(b)} Zoom in the region close to the QCRB.}
\end{figure}

\section{\label{sec:Losses} Loss and Noise Effects}
\subsection{Losses}

The performance of the adaptive estimation strategy for optical phase estimation
becomes sub-optimal in the presence of channel noise and loss, and when the
detectors have reduced quantum efficiency. Most of these effects can be modeled
as a mixing of the input probe with the vacuum state in a beam splitter. This
mixing process adds thermal photons to the probe incident into the homodyne
detector, depending on the transmission $T$ of the beam splitter
\cite{PhysRevA.79.033834, PhysRevA.81.033819}. Moreover, imperfect quantum
efficiency of the detectors can be modeled as a lossy channel, which further
contributes with additional thermal photons proportional to $T$
\cite{PhysRevA.79.033834,Lvovsky2015,RevModPhys.84.621}. Specifically, the lossy
channel modeled as a beam splitter with transmittance $0<T<1$ maps the squeezed
vacuum state $\rho_{r} = \lvert r, 0 \rangle \langle 0, r \rvert$ into a
squeezed thermal state
\begin{equation}
  \label{eq:sq_thermal}
  \rho_{\beta, r_l} = S(r_l) \left[ (1 -
    \mathrm{e}^{-\beta})\sum_{n=0}^{\infty} \mathrm{e}^{-\beta n} \lvert n \rangle
    \langle  n \rangle \rvert \right] S^{\dagger}(r_l),
\end{equation}
where $r_l < r$. This transformation reduces the QFI compared to that squeezed
vacuum states $\rho_{r_l}$ \cite{PhysRevA.79.033834, PhysRevA.81.033819,
  J_Twamley_1996} as:
\begin{equation}
  \label{eq:QFI_st}
  F_Q^{\text{Lossy}} = \left[  \frac{T^2}{1+2T(1-T) \sinh^2(r)  } \right] F_Q.
\end{equation}
where $F_Q = 2 \sinh^2(2r)$ is the QFI about $\theta$ for squeezed vacuum states
with strength $r$. Therefore, the effect of linear losses on the QCRB can be
effectively accounted for by an appropriate rescaling \cite{PhysRevA.79.033834}.

The squeezed thermal states resulting from losses further reduce the maximum
classical Fisher information for the homodyne measurement
\cite{PhysRevA.79.033834}:
\begin{equation}
  \label{eq:FI_st}
  F^{\text{Lossy}}_X(\theta_{\text{opt}}) = \left[  \frac{T^2}{1+4T(1-T) \sinh^2(r)  } \right] F_Q.
\end{equation}
Here, we have used the fact that the maximum of the Fisher information
$F^{\text{Lossy}}_X(\theta)$ is achieved at the optimal phase
$\theta_{\text{opt}}$ in Eq~\eqref{eq:theta_opt}, now for $r_l$.
\begin{figure}[h!t]
  \centering
  \includegraphics[scale=0.345]{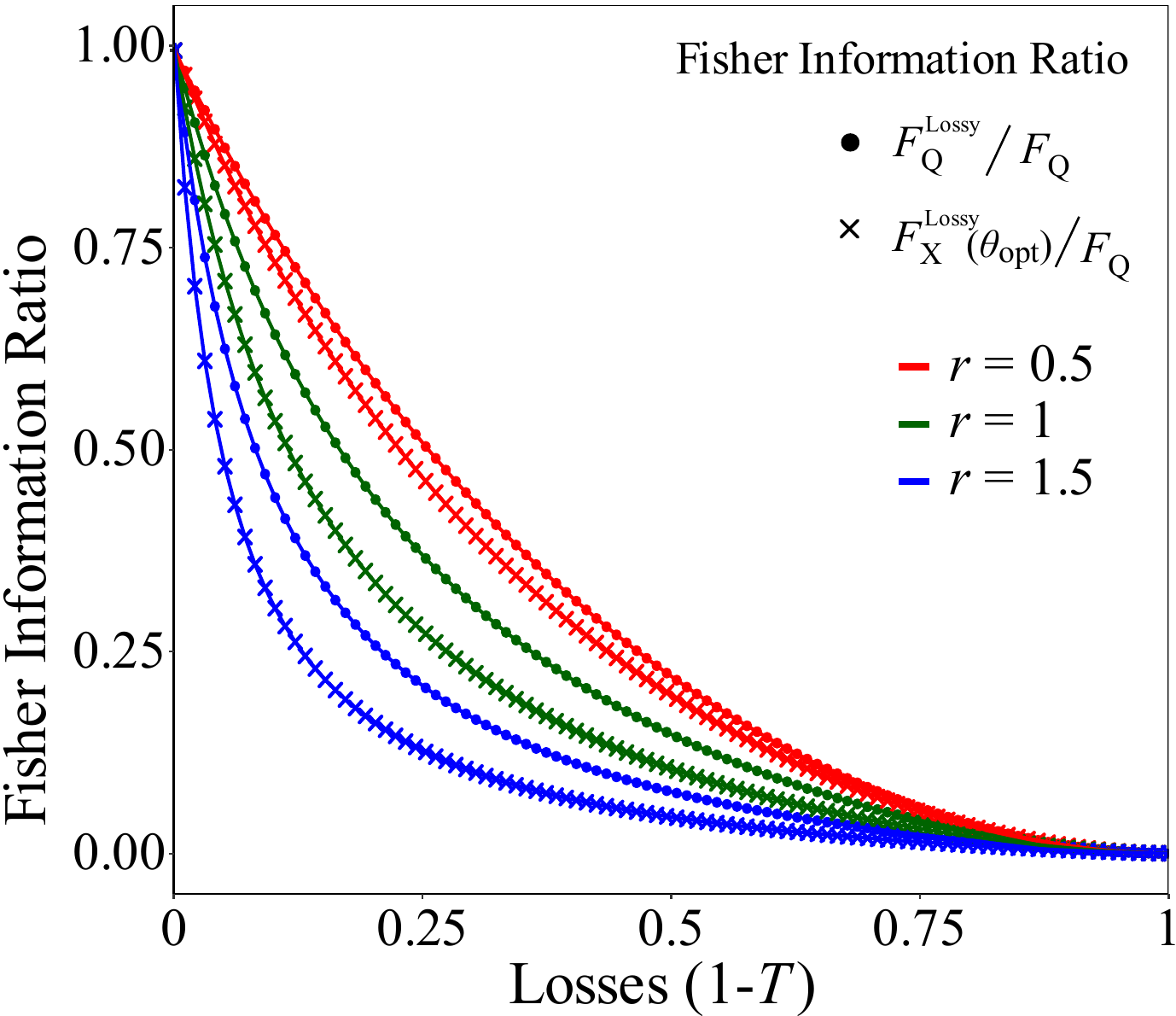}
  \caption{\label{fig:loss} Fisher information ratios $F^{\text{Lossy}}_Q/F_Q$
    (dots) and $F^{\text{Lossy}}_X(\theta_{\text{opt}})/F_Q$ (crosses) for squeezed
    vacuum states as function of $T$ for squeezing strength $r$ of $0.5$, $1.0$
    and $1.5$.}
\end{figure}

Figure~\ref{fig:loss} shows the ratios $F^{\text{Lossy}}_Q/F_Q$ (dots) and
$F^{\text{Lossy}}_X(\theta_{\text{opt}})/F_Q$ (crosses) as a function of the
losses ($1-T$) and different values of $r$. We observe that probe states with
larger $r$, are more sensitive to losses, showing a faster reduction in Fisher
information with channel loss. Figure~\ref{fig:loss} also shows that the Fisher
information $F_Q^{\text{Lossy}}$ and $F_X^{\text{Lossy}}(\theta_{\text{opt}})$
approach $F_Q$ only at $T\approx1$. We note that losses ($T<1$) reduce the
homodyne maximum Fisher information $F^{\text{Lossy}}_X(\theta_{\text{opt}}) <
F_Q^{\text{Lossy}}$, as seen in Eq. (\ref{eq:FI_st}), which prevents the
saturation of the QCRB solely with homodyne. Moreover, we note that in general
the saturation of the QCRB for squeezed thermal states requires the
implementation of non-Gaussian measurements \cite{Oh2019}. However, devising
non-Gaussian measurements that approach the QFI for this problem are highly
complex, and their implementation is still an open problem \cite{Oh2019}.

In a more general setting, the problem of quantum channel estimation involves
finding the optimal probe states and measurements to maximize information in a
lossy and noisy channel \cite{Liu2020, Sisi2024}. Recent advances in quantum
channel estimation have found optimal probes for: Gaussian unitary channels
(which consist of rotated squeezed states) \cite{PhysRevA.94.062313}, bosonic
dephasing channels \cite{MW2024}, and complete positive trace preserving (CPTP)
maps for qubits \cite{Sisi2024}. Moreover, in principle, it is possible to find
the optimal quantum probe states for a general channel estimation problem using
techniques of convex optimization including semidefinite and linear programming,
and conic programming \cite{PhysRevA.107.012428,Sisi2024,Hayashi2024}. Here, we
focus our discussion on squeezed probe states, and adaptive homodyne/heterodyne
measurements, which are readily available in laboratory settings, and consider
common sources of noise and imperfections, including linear losses.

\subsection{Imperfect state preparation}

The processes of state preparation in realistic implementations is often
affected by small random errors yielding state preparation errors. Due to the
Central Limit Theorem, the cumulative effect of small, independent errors in
many probe states will tend to a normal distribution. Therefore, it is
reasonable to assume that, under some state preparation errors, the squeezing
strength $r$ of the probe state $\rho_r = \lvert r, 0 \rangle \langle 0, r
\rvert$ is the output of a random variable $R$ with a normal distribution
$\mathcal{N}\left( r_0, \sigma_r^2 \right)$, where $\sigma_r^2 \geq 0$ is a
small number relative to $r_0$.

To take into account state preparation errors of this kind in the adaptive
strategy, we consider that the samples at every adaptive step are drawn from the
conditional random variable $X \mid R$, where $X$ is a sample from a homodyne
measurement. We then analyze the strategy as described in
Section~\ref{sec:Op_dyne} but considering state preparation errors. In this
case, and without loss of generality by taking $X$ as a sample of
$M_{\text{Hom}}$, the Fisher information about $\theta$ contained in the
conditional random variable $X \mid R$ is calculated with respect to the
conditional density of $X$ given $R$, that is,
\begin{equation}
  \label{eq:QFI_cond}
  F_{X \mid R}(\theta) = \mathrm{E}\left[ F_{X \mid R=r} (\theta) \right],
\end{equation}
where $F_{X \mid R=r} (\theta)$ corresponds to Eq.~\eqref{eq:Fisher_Homodyne} at
a specific value $R=r$. Therefore, in the asymptotic limit, the adaptive
homodyne strategy samples around the phase that maximizes
Eq.~\eqref{eq:QFI_cond}. In experimental settings, typically the range for
$\sigma_r$ lies between $0.01$ and $0.02$ \cite{Berni2015,Xu:19}. For state
preparation errors with small variances $\sigma_r^2$, the optimal phase
$\theta_{\mathrm{opt}}^{noise}$ deviates slightly from the noiseless case
$\theta_{\mathrm{opt}}$, and modifies the Fisher information.

Figure~\ref{fig:state_imp} shows $F_{X \mid R}(\theta)$ for standard deviations
of $r$ in state preparation $\sigma_r=0.01$ and $\sigma_r=0.02$ as a function of
$\theta \in [0, \pi/2)$. We observe that errors in state preparation with
$\sigma_r$ of $0.01$ and $0.02$ have a negligible effect in the performance of
the adaptive strategy, which shows a Fisher information around the QFI at the
optimal phase $\theta_{opt}$. We also note that since the QFI for squeezed
vacuum is a nonlinear function of the squeezing strength $r$, the contribution
of positive deviations of $r$ from the mean $r_0$ to the expected value of $F_{X
  \mid R}(\theta)$ in Eq.~(\ref{eq:QFI_cond}) increases with $r$. This nonlinear
effect causes the maximum of $F_{X \mid R}(\theta)$ to slightly surpass $F_{X
  \mid R = r}(\theta_{\mathrm{opt}})$ corresponding to the QFI (dashed red
line), as can be observed in the inset ($i$) of Figure~\ref{fig:state_imp}.

\begin{figure}[h!t]
  \centering
  \includegraphics[scale=0.35]{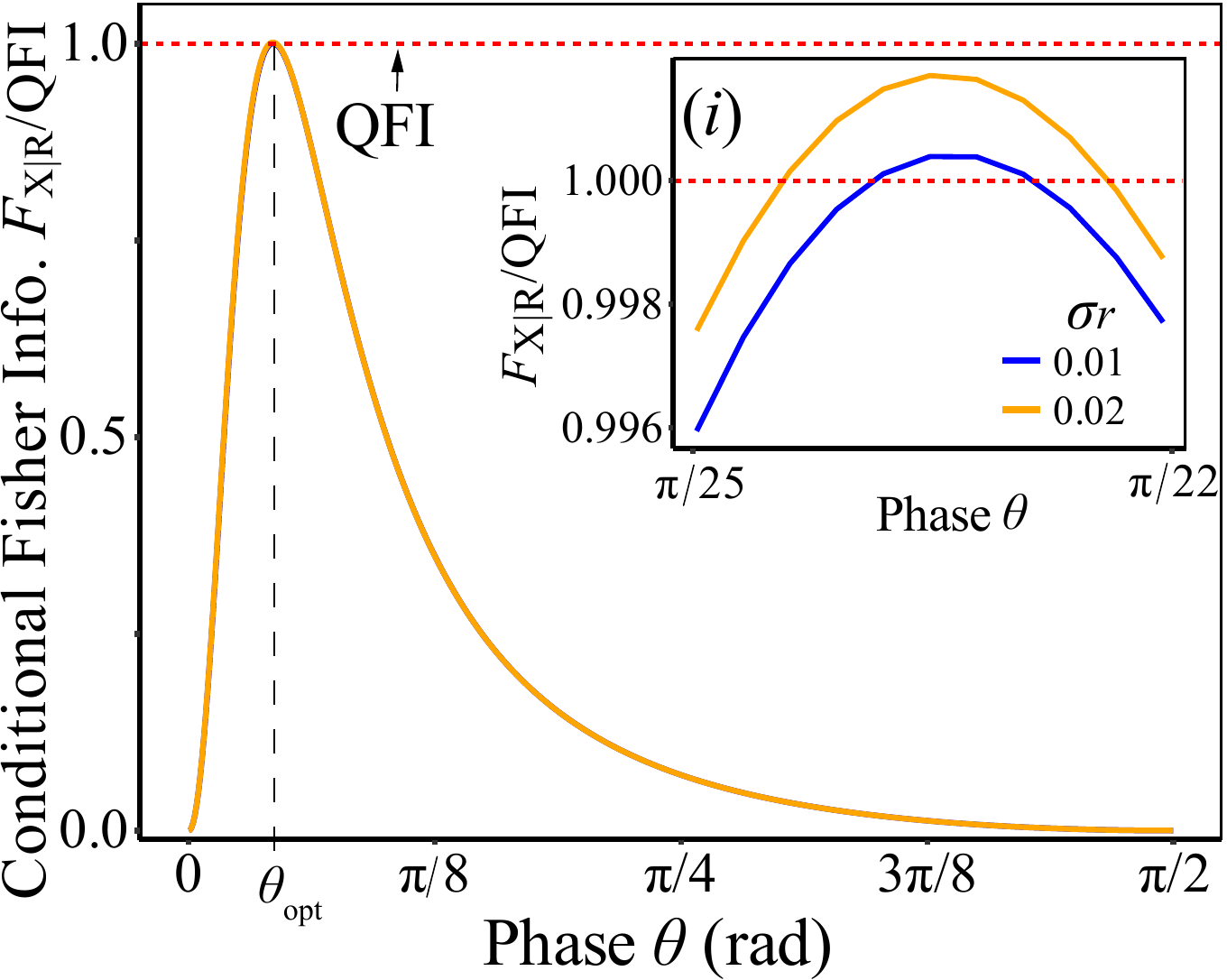}
  \caption{\label{fig:state_imp} Expected Fisher information as function of
    $\theta$ with state preparation errors with normally varying squeezing
    strength for $r_0 = 1$ with $\sigma_r$ of $0.01$ and $0.02$. The dashed red
    line corresponds to the QFI for squeezed vacuum states at $r=1$. Inset ($i$)
    shows a zoom in the maximum of the curve showing the overshot effect due to
    the nonlinear dependence of $F_{X \mid R}(\theta)$ with $r$.}
\end{figure}

\subsection{Phase errors in the homodyne local oscillator (LO)}

As a case of study for measurement errors, we consider that the LO in the
homodyne measurement is subject to small phase errors. Thus, the parameter
$\check{\theta}$, describing the phase estimate and the setting for the POVM
$M_{\check{\theta}}$ in the adaptive measurement protocol, can be considered as
the output of a random variable $\check{\Theta}_{\text{lo}}$ with a normal
distribution $\mathcal{N}\left( \check{\theta}_{\text{lo}},
  \sigma^2(\check{\theta}_{\text{lo}}) \right)$, centered at the ideal
measurement setting $\check{\theta}_{\text{lo}}$ and with a variance
$\sigma^2(\check{\theta}_{\text{lo}})$. Assuming that
$\check{\theta}_{\text{lo}} \approx \theta$, we can evaluate the loss of
information caused by these kinds of measurement errors by calculating the
conditional Fisher information with respect to the random variable $X \mid
\check{\Theta}_{\text{lo}}$,
\begin{equation}
\label{eq:QFI_cond2}
F_{X \mid \check{\Theta}_{\text{lo}}  } = \mathrm{E}\left[ F_{X \mid  \check{\Theta}_{\text{lo}} = \check{\theta}} \right],
\end{equation}
where $X$ is a sample from the POVM $M_{\check{\theta}}$ applied to the state
$\rho_r = \lvert r, 0 \rangle \langle 0, r \rvert$.

Figure~\ref{fig:state_imp2} shows Eq~\eqref{eq:QFI_cond2} as a function of the
standard deviations $\sigma(\check{\theta}_{\mathrm{lo}})$ in measurement
implementation, ranging from $0.01$ to $0.1$ radians. We observe that errors in
the phase reference for the homodyne detection has a moderate detrimental effect
in the Fisher information, which decreases approximately to half of the QFI for
large LO phase noise of $\sigma(\check{\theta}_{\mathrm{lo}})\approx 0.15$ rad.
However, we note that this technical problem can be overcome by standard phase
stabilization techniques \cite{DiMario2022,PhysRevResearch.3.013200}.

\begin{figure}[h!t]
  \centering
  \includegraphics[scale=0.33]{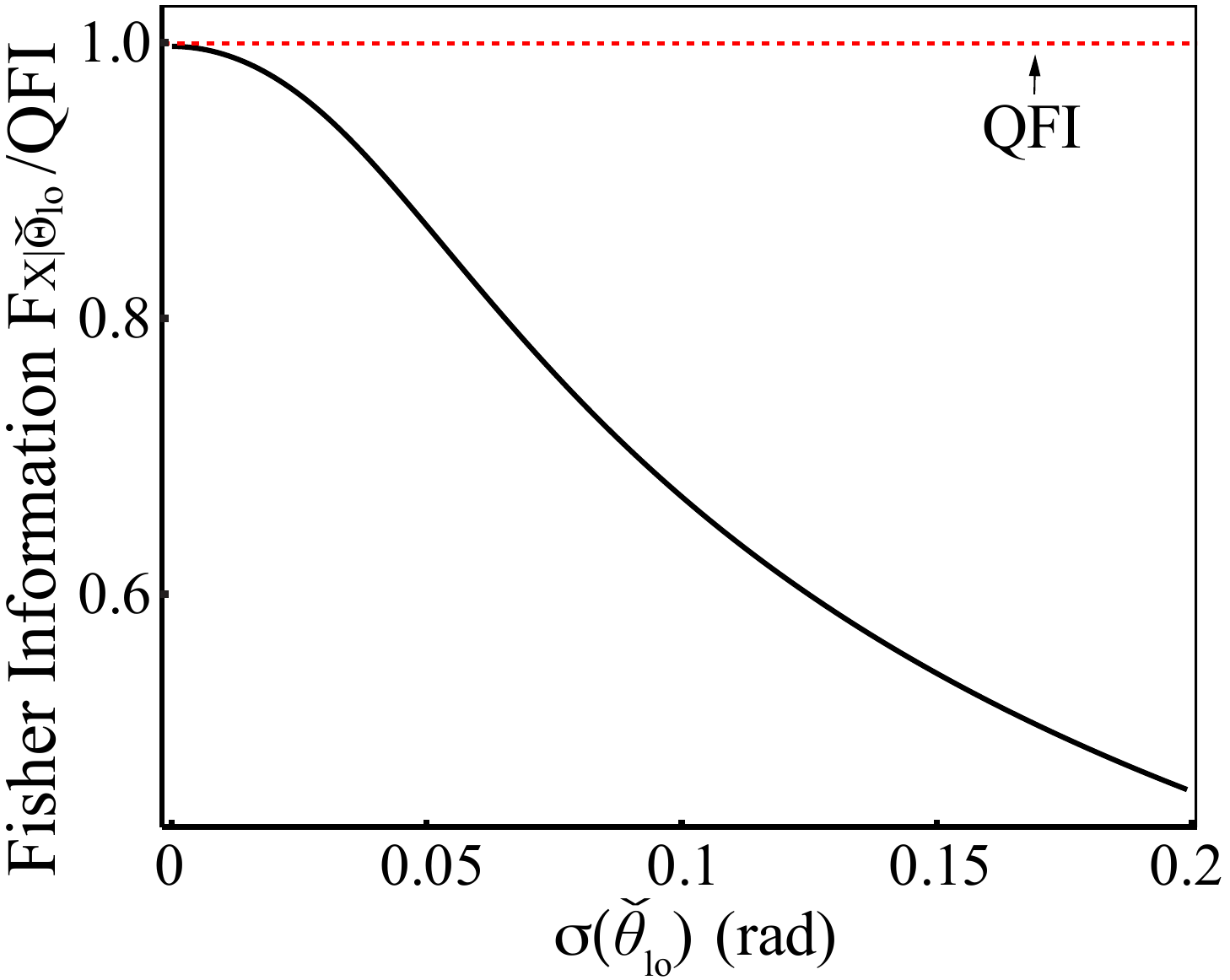}
  \caption{\label{fig:state_imp2} Expected Fisher information of the homodyne
    measurement as a function of $\sigma(\check{\theta}_{\mathrm{lo}})$
    quantifying the phase errors in the LO for input states with a squeezing
    strength of $r=1$ and measurement errors normally distributed around
    $\check{\theta}_{\mathrm{lo}} = \theta$. The dashed red line
    corresponds to the QFI for squeezed vacuum states at $r=1$. }
\end{figure}

\section{\label{sec:General-dyne} Combined homodyne-heterodyne (CHH) measurement
  strategy}

In general, the problem of phase estimation involves estimation over the
complete range of possible phases from $0$ to $2\pi$. However, when using
squeezed vacuum states for phase estimation beyond the SNL, there is a physical
limitation on the range of phases that can be estimated. Squeezed vacuum states
are invariant under phase shifts of $\pi$, restricting the estimable phases to
the interval $[0, \pi)$, which is half of the complete range in the general
problem of phase estimation. This symmetry can be seen in the Husimi Q
representation of the state $\rho_r$, $Q(\alpha) =
\frac{1}{\pi}\bra{\alpha}\rho_r\ket{\alpha}$ (see inset ($i$)
Fig.~\ref{fig:scheme}), which shows that the squeezed vacuum probe can only
encode the phase modulo $\pi$. Moreover, the measurement employed for decoding
the phase can impose severe constraints on the range of phases that can be
estimated. For instance, homodyne measurements further reduce the range of
phases within which phase estimation is possible to $[0,\pi/2)$. This is because
the probability distributions of outcomes from homodyne measurements in
Eq.~\eqref{eq:pdf_homodyne}, associated with POVMs in Eq.~\eqref{eq:set_of_lues},
are $\pi/2$ periodic. Consequently, any strategy based on adaptive homodyne is
restricted to estimating phases within the range $[0, \pi/2)$. To go beyond this
limited range and enable phase estimation with squeezed vacuum over the entire
range of phases $[0, \pi)$, it is necessary to include measurements beyond
homodyne.

\begin{figure}[h!t]
  \centering
  \includegraphics[scale = 0.31]{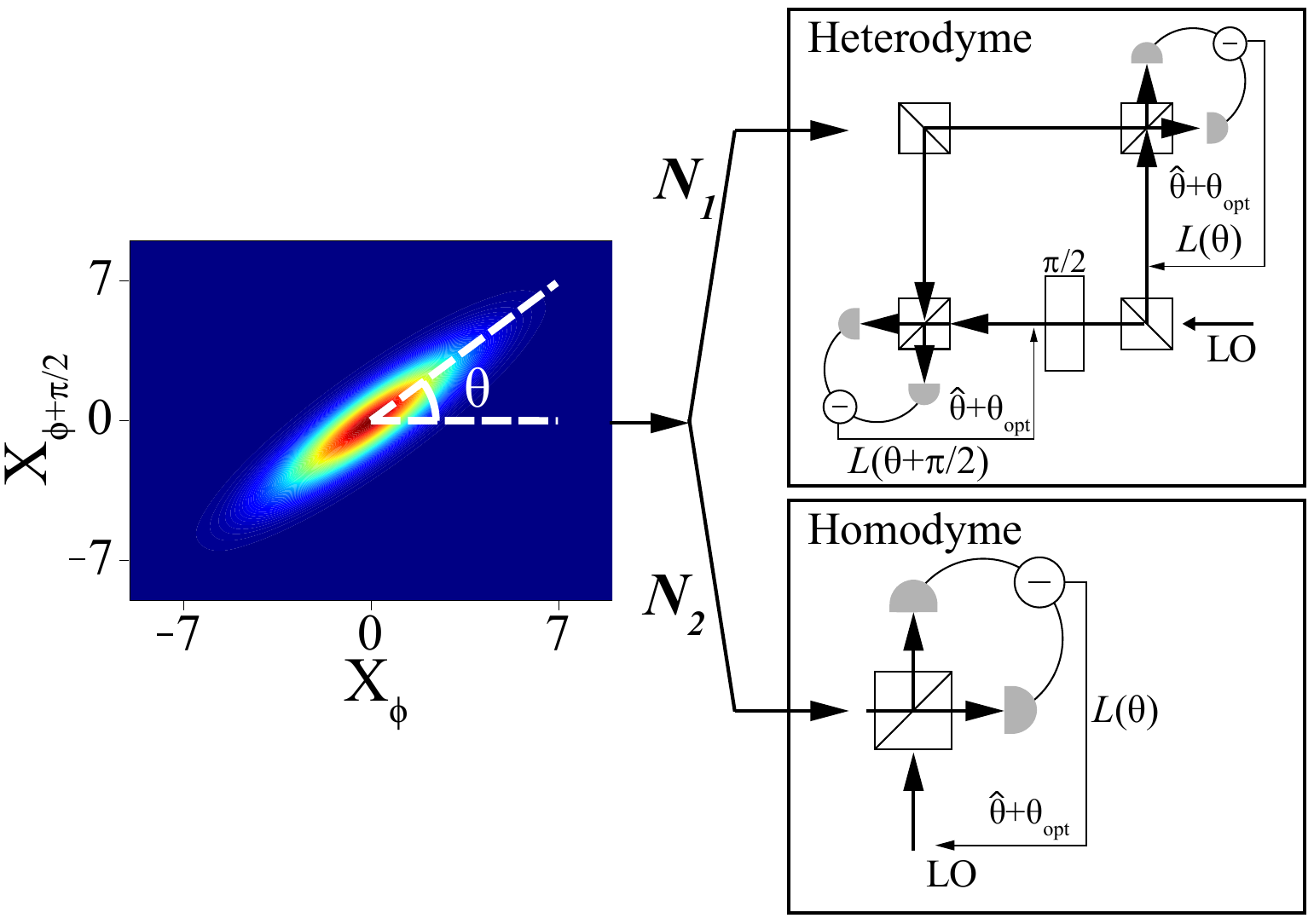}
  \caption{\label{fig:scheme_generalized} Combined homodyne-heterodyne (CHH)
    measurement strategy for phase estimation based on squeezed vacuum probes
    for phases $\theta\in[0, \pi)$. This strategy takes advantage of the
    capability of heterodyne measurements to unambiguously estimate phases
    within the whole parametric space $[0, \pi)$ for squeezed vacuum, overcoming
    the non-identifiability problem in the likelihoods from homodyne
    measurements. By employing a small sample of heterodyne measurements, the
    unknown phase $\theta$ is localized within a neighborhood within $[0, \pi)$.
    Then, the strategy employs adaptive homodyne for phase estimation within
    this neighborhood.}
\end{figure}

We propose a combined homodyne-heterodyne (CHH) measurement strategy that uses
heterodyine to identify the neighborhood of the unknown phase within $[0,\pi)$,
and subsequently, adaptive homodyne to implement an asymptotically optimal
measurement strategy. This CHH measurement uses time sharing between heterodyne
and homodyne, which is a special case of the generalized dyne measurement
\cite{Oh2019}, and enables optimal phase estimation within $[0,\pi)$ in the
asymptotic limit.

Figure \ref{fig:scheme_generalized} shows the schematic of the proposed CHH
strategy. The strategy implements a heterodyne measurement on a small sample of
$N_1$ probe states, denoted as $\vec{X}^{\text{Het}}_{N_1}$, to determine the
neighborhood in the parametric space $[0, \pi)$ to which the unknown phase
belongs. This makes the parameter identifiable within $[0,\pi)$ solving the
non-identifiability problem of homodyne \cite{Hayashi2005}, and produces a
likelihood peaked around the true value. After the heterodyne sampling
$\vec{X}^{\text{Het}}_{N_1}$, the CHH strategy implements the adaptive homodyne
strategy described in Section \ref{sec:Op_dyne}, with $N_2$ copies of the probe
state and $m$ adaptive measurements. By construction, when $N_2,m \to \infty$
(in the asymptotic limit), and $N_1$ is large enough such that the MLE from the
heterodyne sampling $\widehat{\theta}\left(\vec{X}^{\text{Het}}_{N_1}\right) \in
[0, \pi/2)$ with high probability, this strategy saturates the QCRB
Eq.~\eqref{eq:QCRB_sq} \cite{Hayashi2005,Paninski2005} (see
Appendix~\ref{AppendixExtension} for discussion of the proof). Thus, in the
asymptotic limit this strategy is expected to be able to extract all the
information about the phase encoded in the quantum probe and approach the QCRB
for any phase within $[0, \pi)$.

Figure~\ref{fig:performance_2} (a) shows the performance of the CHH phase
estimation strategy enabling near quantum-optimal phase estimation for phases
within $[0,\pi)$ with a finite number of samples. These results are obtained
from the average of five Monte Carlo simulations considering $N=3705$ copies of
the squeezed vacuum probes with $m=15$ adaptive steps, each with a sample of
size $\nu=247$. In this combined strategy, the first step consists of a
heterodyne measurement with a sample of size $N_1 = \nu_{1} = 247$. This sample
is large enough to produce estimates with high probability in $[0, \pi/2)$, and
it is significantly smaller than the total of subsequent (homodyne) samples $N_2
= \sum_{i=2}^{14}\nu_i = 3458$. We note that within the statistical noise of our
simulations, the CHH strategy enables phase estimation approaching the QCRB
within the full interval $\theta \in [0, \pi)$, as seen in the zoom in
Figure~\ref{fig:performance_2} (b). For a more rigorous analysis of the
convergence of the CHH strategy Appendix \ref{AppendixExtension} discusses the
proof of the asymptotic convergence of the CHH strategy to the QCRB over the
full range $[0,\pi)$.

\begin{figure}[h!t]
  \centering
  \includegraphics[scale=0.37]{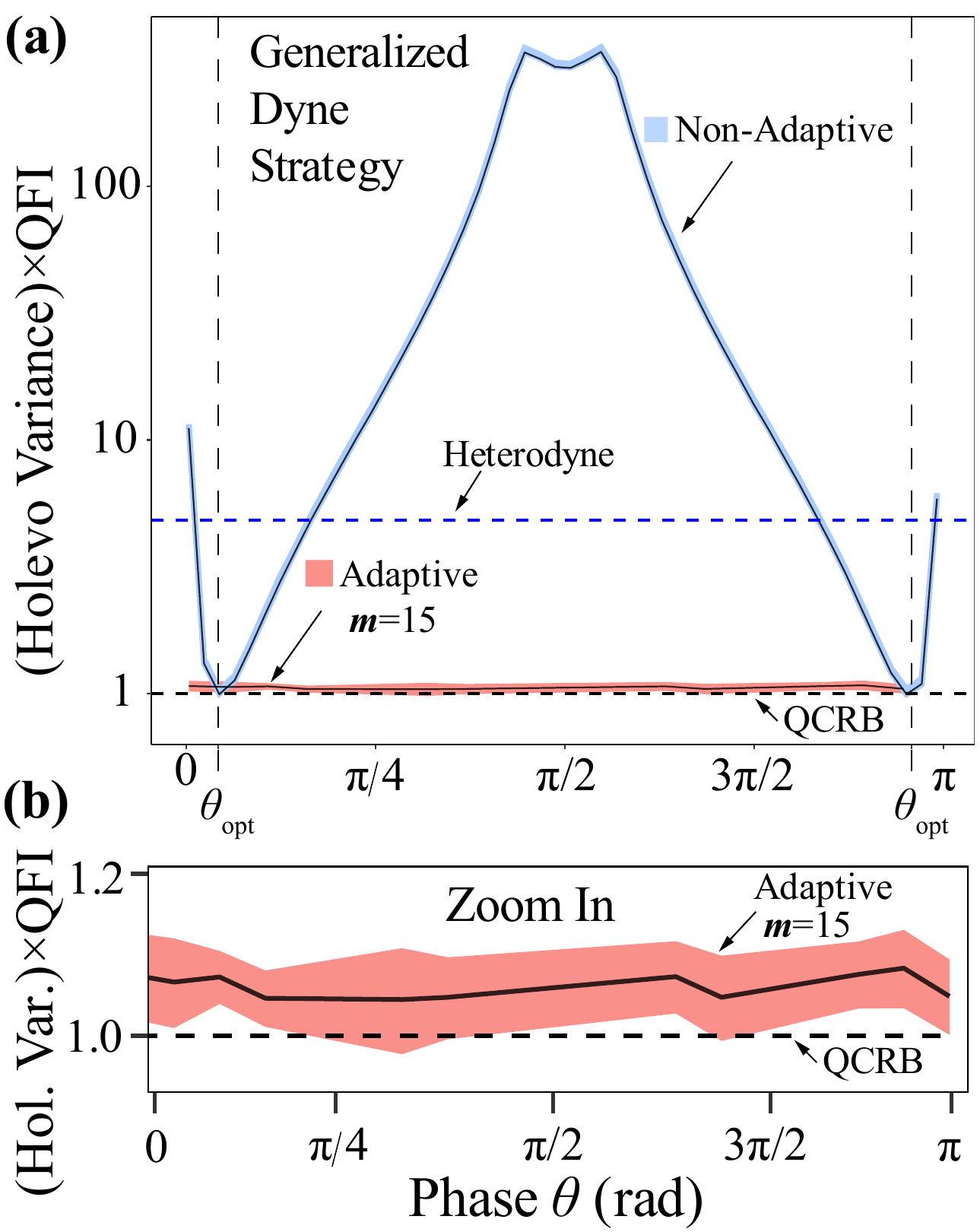}
  \caption{\label{fig:performance_2} \textbf{(a)} Performance of the CHH phase
    estimation strategy with $m=15$ adaptive steps (orange), using $N=3705$
    probes with $r=1.01$. The performance of homodyne detection without feedback
    (blue) is shown for reference. In the CHH strategy, the initial adaptive
    step involves heterodyne sampling, while the subsequent adaptive steps
    utilize locally optimal homodyne POVMs in Eq.~(\ref{eq:set_of_lues}).
    Throughout the simulation, the sample size remains constant at $\nu = 247$
    (number of probe states per adaptive step). The dashed blue line shows the
    heterodyne limit, while the dashed black line corresponds to the QCRB. The
    shaded regions indicate a one standard deviation. \textbf{(b)} Zoom
    in the region close to the QCRB.}
\end{figure}

\begin{figure}[h!t]
	\centering
	\includegraphics[scale=0.32]{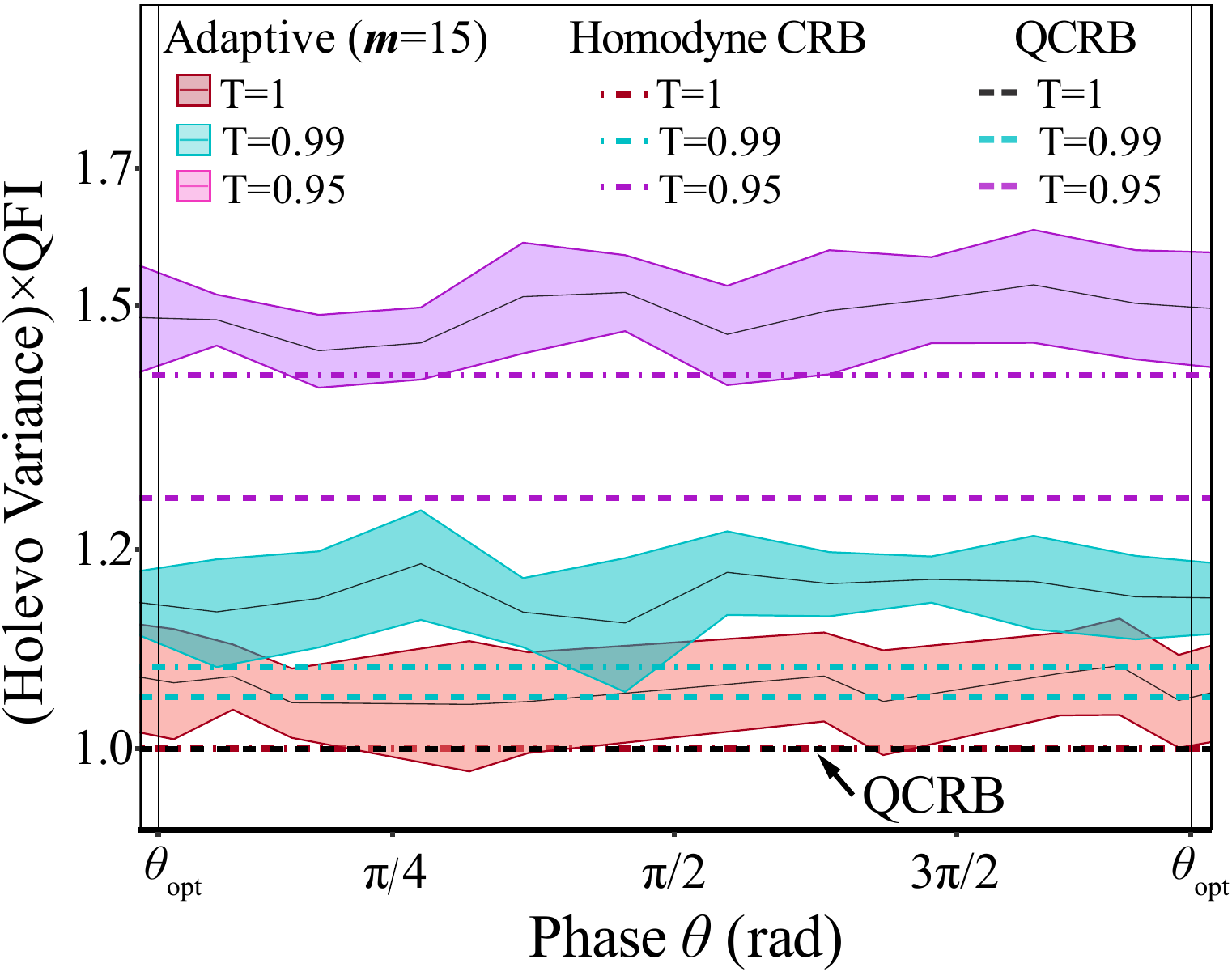}
	\caption{\label{fig:performance_2NOISE}Performance of the CHH strategy under
    losses, with $m=15$ adaptive steps and $N=3705$ probes with $r=1.01$. Losses
    are characterized by the channel transmission $T$ for $T=1$ in brown
    (lossless), $T=0.99$ in light blue, and $T=0.95$ in purple. The
    dotted-dashed lines represent the minimum CRB for the homodyne measurement
    normalized to the QCRB, i.e.
    $F^{\text{Lossy}}_Q/F^{\text{Lossy}}_X(\theta_{\text{opt}})$. The dashed
    lines represent the QCRB for different channel transmissions $T$. For $T=1$,
    the homodyne CRB equals the QCRB due to the local optimality of the homodyne
    measurement.}
\end{figure}

As a final step, we investigate the CHH measurement strategy under channel
losses. Fig.~\ref{fig:performance_2NOISE} shows the normalized Holevo variance
of the estimator obtained through the CHH estimation strategy with $m=15$
adaptive steps each with $\nu=247$ probe states with a squeezing strength
$r=1.01$. The loss is characterized by the channel transmission $T$ for $T=1$ in
brown (lossless), $T=0.99$ in light blue, and $T=0.95$ in purple. The shaded
regions represent one standard deviation. The dotted-dashed lines represent the
minimum Cramér-Rao bound (CRB) for the homodyne measurement for different
channel transmissions $T$, obtained as the inverse of the classical Fisher
information in Eq.~\eqref{eq:FI_st}, and normalized to the QCRB. The dashed
lines represent the QCRB for different channel transmissions $T$, which is
obtained as the inverse of Eq.~\eqref{eq:QFI_st}. We note that for a lossless
channel $T=1$, the CRB equals the QCRB due to the local optimality of homodyne
measurements. We observe that the performance of the CHH strategy is close to
the homodyne CRB (dotted-dashed) for the channel transmissions considered, and
it is expected to approach this bound for all cases. Moreover, the performance
of the CHH strategy is maintained for all the phases within the full range
$[0,\pi)$ for squeezed vacuum.

\section{\label{sec:Diss}  Discussion \& Conclusions}

Our analysis and numerical simulations show that the proposed adaptive
estimation strategy efficiently extracts the maximum attainable information
about the unknown phase encoded in squeezed vacuum states in the asymptotic
limit. However, we note that while the proposed strategy allows for phase
estimation at the quantum limit for the full range of phases $[0, \pi)$ for
squeezed vacuum states, this range is limited due to the $\pi$ phase-shift
symmetry inherent to these states. We also note that other quantum states used
for phase estimation at the quantum limit such as NOON states face the same
limitation due to their inherent phase-shift symmetries
\cite{escher2011quantum}. On the other hand, there may be other optical probes
capable of solving the non-identifiability problem in the phase encoding due to
state symmetries, albeit with lower QFI. Coherent states, for example, have a
significantly lower QFI compared to squeezed vacuum states, but allow for
unambiguously identifying the quadrants of the phase within $[0,2\pi)$
\cite{PhysRevLett.128.230501, PhysRevLett.125.120505, Rodriguez-Garcia2022,
  PhysRevA.70.043812}.

As an alternative quantum probe for phase estimation, displaced squeezed states
$D(\alpha) \lvert 0, r \rangle$ can offer the ability to unambiguously encode
phases within $[0, 2\pi)$ with a higher QFI compared to coherent states. We
note, however, that there will be a trade-off between the achievable QFI
compared to that of squeezed vacuum states and the ability to identify the
quadrants of the phase. Finding the best trade off requires optimization of both
the squeezing strength $r$ and the displacement parameter $\alpha$, given a
fixed resource budget in terms of the number of photons. Moreover, this
trade-off will critically depend on the available resources and experimental
constraints. Further research will focus on exploring and identifying the
optimal quantum probe states capable of overcoming the non-identifiability
problem while maintaining a high QFI for phase estimation.

In summary, we propose a Gaussian estimation strategy for optical phase
estimation with squeezed vacuum states that approaches the quantum limit in
precision. This strategy leverages homodyne measurements and rotations to
implement a complete set of locally optimal POVMs. This set of POVMs are used to
construct an adaptive estimation method based on the Adaptive Quantum State
Estimation (AQSE) formalism, which ensures consistency and efficiency of the
estimator in the asymptotic limit, with variance equal to the inverse of the QFI
for phases $\theta\in[0, \pi/2)$. To extend the parameter range for phase
estimation to $[0, \pi)$, which is the maximum range of phases that can be
encoded in squeezed vacuum states, we generalize the estimation strategy to
incorporate a small number of heterodyne measurements. This heterodyne sampling
allows for identifying the neighborhood of the phase within $[0, \pi)$, solving
the non-identifiability problem in the likelihoods from homodyne measurements,
while maintaining a quantum-optimal performance in the limit of many adaptive
steps. This result represents a significant advancement in high-precision
quantum metrology and optical phase estimation based on quantum correlated
states.

\section{\label{sec:Ap} Appendix}

\subsection{\label{sec:MLE} Efficiency and consistency of MLE in quantum
  systems}

This appendix describes the conditions under which the MLE used in AQSE is
consistent and efficient over the complete parameter space $\Theta$. Let us
consider the problem of estimating an unknown parameter $\theta \in \Theta$
associated with a set of quantum states $\left\{ \rho(\theta) : \theta \in
  \Theta \right\}$ from measurements of the system. When independent
measurements are performed over the system, a set of independent random
variables $\vec{X}_N = X_1, \ldots, X_N \in \mathcal{X}^N, \, N \geq 1$ carry
the information about $\theta$. In this case, the total Fisher information about
$\theta$ is the sum of the individual Fisher information values for each
measurement. This property can be exploited to reach the QCRB in the asymptotic
limit ($N \to \infty$). When the outcomes of a POVM $M$ have a Fisher
information that coincides with the QFI, and their probability distribution
satisfies a set of mild regularity conditions described below, it can be shown
that, in the asymptotic limit, the MLE applied to the outcomes of $M$ can achieve
the QCRB \cite{Fujiwara2006}.

To saturate the QCRB, the MLE requires to be asymptotically consistent, which
means that as the sample size increases, the MLE converges to the true value of
the parameter $\theta$ in probability (weak sense) or almost surly (strong
sense) \cite{Keener2010}. For a MLE to be asymptotically consistent, the
following conditions over the parametric set $\Theta$ and the set of density
functions $\left\{ f(X_i \mid \theta; M_i) \right\}_{\theta \in \Theta}$ for
each POVM $M_i$ must be satisfied \cite{Keener2010,Paninski2005, Fujiwara2006}:
\begin{itemize}
\item \textbf{Compactness:} The parameter space $\Theta$ and the space of POVMs
  must be compact, which means that it is closed and bounded. This property
  ensures that the MLE exists for any sample size.
\item \textbf{Identifiability:} The true value of the parameter must be uniquely
  determined by the probability distribution. In other words, different values
  of the parameter must produce different probability distributions.
\item \textbf{Measurability:} The probability density function $f(X_i \mid
  \theta; M_i)$ must be measurable for all $X_i = x_i$ and for each POVM $M_i$.
  Thus the MLE is well-defined as a random variable.
\item \textbf{Continuity:} The probability density function $f(X_i \mid \theta;
  M_i)$ must be continuous in the parameter space $\Theta$ for all $X_i = x_i$
  and for each POVM $M_i$. This guarantees that small changes in the value of
  the parameter result in small changes in the probability.
\item \textbf{Dominance:} The log likelihood $\log \left[ f(X_i \mid \theta;
    M_i) \right]$ is uniformly Lipschitz in $\theta$ with respect to some
  dominating measure on $\mathcal{X}$. This provide the convergence of the MLE.
\end{itemize}

Under this set of regularity conditions, the MLE exhibits asymptotic
consistency. As a result, assuming sufficiently smooth likelihoods, the
distribution of the MLE in the limit $N \to \infty$ follows a normal
distribution:
\begin{equation}
  \label{eq:asymp_normal}
  \widehat{\theta}(\vec{X}_N) \sim \mathcal{N}\left(\theta, \frac{1}{F_{\vec{X}_N}(\theta)}  \right).
\end{equation}
Here $\widehat{\theta}(\vec{X}_N)$ denotes the MLE based on the sample
$\vec{X}_N$, and $F_{\vec{X}_N}(\theta)$ represents the Fisher information
associated with the sample. Consequently, the variance of
$\widehat{\theta}(\vec{X}_N)$ is given by $\frac{1}{F_{\vec{X}_N}(\theta)}$, and
$\widehat{\theta}(\vec{X}_N)$ achieves the classical Cramér-Rao bound for all
$\theta \in \Theta$. Notably, when the Fisher information of the random
variables $\vec{X}_N$ equals the QFI, the MLE attains the QCRB, which
corresponds to the ultimate limit in precision for parameter estimation. Based
on this observation, the AQSE framework can be exploited to construct an
asymptotically optimal strategy using homodyne detection. By satisfying the
regularity conditions outlined above, the MLE is guaranteed to converge to the
true value, and AQSE can be used to adapt the homodyne measurements to sample
around the optimal point $\theta_{\text{opt}}$, attaining the QCRB.

\subsection{\label{Appendix_Proof} Proof of Convergence of the phase estimator
  variance to the QCRB}

\subsubsection{\label{Appendix_Proof_s1} Asymptotic consistency of the MLE for
  the adaptive homodyne strategy}

This appendix details the proof of the strong asymptotic consistency of the MLE
obtained from the proposed adaptive homodyne strategy based on AQSE. As a first
step, the adaptive strategy defines an initial estimate $\check{\theta}_1 \sim
U\left( \tilde{\Theta} \right)$ for the parameter $\theta$, where
$\tilde{\Theta}$ is a compact subset of $(0, \pi/2)$. Then, the homodyne
measurement with POVM $M_{\check{\theta}_1}$ in Eq.~\eqref{eq:lue_homodyne}
yields a sample of measurement outcomes of size $\nu$,
$\vec{x}_{\nu}(\check{\theta}_1) = x_1(\check{\theta}_1), \ldots
x_\nu(\check{\theta}_1)$. The MLE is then applied to this sample of measurement
outcomes $\vec{x}_{\nu}(\check{\theta}_1)$ resulting in an updated estimate
$\check{\theta}_1 :=
\widehat{\theta}_{\mathrm{MLE}}\left(\vec{x}_{\nu}(\check{\theta}_1)\right)$.
This new estimate serves as the subsequent guess of the parameter $\theta$ for
the next adaptive step. Then, for a given subsequent adaptive step $m$, $m \geq
2$, the adaptive strategy performs the POVM $M_{\check{\theta}_{m}}$ yielding
the sample of outcomes $\vec{x}_{\nu}(\check{\theta}_{m}) =
x_1(\check{\theta}_m), \ldots x_\nu(\check{\theta}_m)$, from which the MLE
produces an estimate
$\widehat{\theta}_{\mathrm{MLE}}\left(\vec{x}_{\nu}(\check{\theta}_1),\ldots,
\vec{x}_{\nu}(\check{\theta}_{m}) \right) = \check{\theta}_{m+1}$. This procedure is
repeated iteratively in subsequent adaptive measurements. By satisfying the
regularity conditions described in Appendix:~\ref{sec:MLE} for the statistical
model for each homodyne measurement, we prove the \textit{almost sure}
convergence of the sequence of MLEs
$\widehat{\theta}_{\mathrm{MLE}}\left(\vec{X}_{\nu}(\check{\theta}_1), \ldots,
\vec{X}_{\nu}(\check{\theta}_m)\right)$ to the true parameter $\theta$ as the number
of adaptive steps $m \to \infty$.

First, we note that the set of homodyne measurements $\left\{
  M_{\check\theta}(dx) \right\}$ forms a set of locally optimal POVMs
parameterized by $\tilde{\Theta}$. Without loss of generality, we assume that
the true phase to be estimated is $\theta_0 \in \tilde{\Theta}$. For this proof,
we assume that the regularity conditions described in Appendix~\ref{sec:MLE} are
satisfied. We first present a series of auxiliary
\textbf{\Cref{obs1,obs2,obs3,obs4}}. Then we present the main result of this
proof in \textbf{Theorem \ref{MainTheorem}}, which shows the asymptotic
consistency and convergence of the MLE. The method employed in this proof is
analogous to the technique used to bound the probability of rejecting the null
hypothesis in binary hypothesis testing based on the likelihood ratio test
\cite{Keener2010}. This technique has been applied to prove the asymptotic
consistency to the MLE in the context of Optimal Design of Experiments
\cite{mccormick_strong_1988} and in the context of quantum parameter estimation
\cite{Fujiwara2006}.

Let $\phi(\theta, \check{\theta}) = \theta + \theta_{\text{opt}} -
\check{\theta}$ be the argument of the homodyne POVM in Eq.
(\ref{eq:lue_homodyne}) for any $\theta$, with $\check{\theta} \in
\tilde{\Theta}$ and $\theta_{\text{opt}}$ defined in Eq.~\eqref{eq:theta_opt}.
For any $\epsilon > 0$, let the open neighborhood centered at $\theta$ and
radius $\epsilon$ be:

\begin{equation*}
  N_{\epsilon}(\theta) = \left\{ \theta': \lvert \theta'-\theta \rvert < \epsilon \right\} = (\theta - \epsilon, \theta + \epsilon).
\end{equation*}
Let the log of the ratio of likelihoods be:
\begin{equation*}
  R(\theta, \theta_0, \check{\theta}) = \log \left( \frac{f(\vec{X}_{\nu}(\check{\theta}) \mid \phi(\theta, \check{\theta})    )}{f(\vec{X}_{\nu}(\check{\theta}) \mid \phi(\theta_0, \check{\theta})   )} \right),
\end{equation*}
and
\begin{equation*}
  R(\epsilon, \theta, \theta_0, \check{\theta}) =  \sup_{\theta' \in N_{\epsilon}(\theta) }   R(\theta', \theta_0, \check{\theta}),
\end{equation*}
where $f(x \mid \theta)$ is defined in Eq.~\eqref{eq:pdf_homodyne} for any
$\theta \in \tilde{\Theta}$.

\begin{lemma}
  \label{obs1}
  For any $\theta, \theta_0 \in \tilde{\Theta}$ with $\theta \neq \theta_0$, it
  follows that:
 \begin{equation}
   \label{eq:lemma1}
   \left[  \frac{ 2 \sigma\left( \phi(\theta, \check{\theta})  \right)  \sigma\left(   \phi(\theta_0, \check{\theta})  \right)  }{ \sigma^2\left( \phi(\theta, \check{\theta})  \right) +  \sigma^2\left( \phi(\theta_0, \check{\theta})  \right)   } \right]^{1/2} < 1,
 \end{equation}
 where $\sigma(\theta)$ is defined in Eq.~\eqref{eq:var_homodyne} for the
 probability density function in Eq.~\eqref{eq:pdf_homodyne} of the outcomes of
 homodyne measurements.
 \\
\end{lemma}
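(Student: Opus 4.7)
The plan is to recognize the quantity on the left-hand side as the Hellinger affinity (Bhattacharyya coefficient) between two zero-mean Gaussian distributions, and then reduce the inequality to a combination of the AM-GM inequality and an identifiability argument. Specifically, the measurement outcomes $X$ of $M_{\check\theta}$ applied to $\rho(\theta)$ and to $\rho(\theta_0)$ are, by Eq.~\eqref{eq:lue_homodyne} together with Eqs.~\eqref{eq:pdf_homodyne}--\eqref{eq:var_homodyne}, normal with zero mean and variances $\sigma_1^2 := \sigma^2(\phi(\theta,\check\theta))$ and $\sigma_0^2 := \sigma^2(\phi(\theta_0,\check\theta))$. A direct Gaussian integral gives
\begin{equation*}
\int_{\mathbb{R}} \sqrt{f(x\mid\phi(\theta,\check\theta))\,f(x\mid\phi(\theta_0,\check\theta))}\,dx = \left[\frac{2\sigma_1\sigma_0}{\sigma_1^2+\sigma_0^2}\right]^{1/2},
\end{equation*}
which is exactly the quantity in Eq.~\eqref{eq:lemma1}. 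So the lemma asserts that the Hellinger affinity between the two measurement laws is strictly less than $1$.

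From here, the first step is to apply AM-GM in the form $\sigma_1^2+\sigma_0^2 \geq 2\sigma_1\sigma_0$, which yields $\bigl[2\sigma_1\sigma_0/(\sigma_1^2+\sigma_0^2)\bigr]^{1/2}\leq 1$ with equality if and only if $\sigma_1=\sigma_0$. The remaining work is to promote this to a strict inequality by ruling out $\sigma_1=\sigma_0$ whenever $\theta\neq\theta_0$. Writing $\sigma^2(\alpha)=e^{-2r}+(e^{2r}-e^{-2r})\sin^2\alpha$, which is strictly monotone in $\sin^2\alpha$ because $r>0$, the problem reduces to showing $\sin^2(\phi(\theta,\check\theta))\neq \sin^2(\phi(\theta_0,\check\theta))$.

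I would then argue the identifiability part as follows. Equality of the squared sines would force either $\phi(\theta,\check\theta)\equiv\phi(\theta_0,\check\theta)\pmod{\pi}$ or $\phi(\theta,\check\theta)\equiv-\phi(\theta_0,\check\theta)\pmod{\pi}$. The first case is immediate to dismiss: the definition of $\phi$ gives $\phi(\theta,\check\theta)-\phi(\theta_0,\check\theta)=\theta-\theta_0$, and since $\theta,\theta_0\in\tilde\Theta\subset(0,\pi/2)$ their difference lies in $(-\pi/2,\pi/2)\setminus\{0\}$, so it cannot be a nonzero integer multiple of $\pi$. The main obstacle, and the step I expect to require the most care, is excluding the reflection case $\phi(\theta,\check\theta)+\phi(\theta_0,\check\theta)\equiv 0\pmod{\pi}$, i.e.\ $\theta+\theta_0+2\theta_{\mathrm{opt}}-2\check\theta\equiv 0\pmod\pi$. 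This identity picks out at most a measure-zero set of exceptional $\check\theta$; the natural way to dispose of it is to invoke compactness of $\tilde\Theta$ together with the fact that the adaptive strategy only commits to values $\check\theta$ drawn from the estimator sequence, which by design stays away from such degenerate reflections (and in the asymptotic regime concentrates near $\theta_0$, where $\phi(\theta,\check\theta)\approx \theta-\theta_0+\theta_{\mathrm{opt}}$ and $\phi(\theta_0,\check\theta)\approx\theta_{\mathrm{opt}}$, so their sum is bounded away from any multiple of $\pi$).

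Assuming this degenerate case is excluded by the setting in which the lemma is used (which is the identifiability hypothesis listed in Appendix~\ref{sec:MLE}), the chain is: $\theta\neq\theta_0 \Rightarrow \sin^2(\phi(\theta,\check\theta))\neq\sin^2(\phi(\theta_0,\check\theta)) \Rightarrow \sigma_1\neq\sigma_0 \Rightarrow$ strict AM-GM $\Rightarrow$ Eq.~\eqref{eq:lemma1}. This lemma then plays its intended downstream role in bounding $\mathrm{E}_{\theta_0}[\exp\{R(\theta,\theta_0,\check\theta)/2\}]<1$, which is the standard ingredient for the large-deviation/likelihood-ratio style consistency arguments cited from \cite{Keener2010, mccormick_strong_1988, Fujiwara2006}.
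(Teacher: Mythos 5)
Your proof follows essentially the same route as the paper's: the paper also clears the denominator to reduce Eq.~\eqref{eq:lemma1} to $\left(\sigma(\phi(\theta,\check\theta))-\sigma(\phi(\theta_0,\check\theta))\right)^2>0$ (your strict AM--GM step) and then closes with the single phrase ``which holds from the identifiability condition.'' The Hellinger-affinity framing you open with is not used in the paper's proof of this lemma, but it is precisely the Gaussian integral computed in Lemma~\ref{obs2}, so it is consistent with the surrounding material. Where you add genuine value is in unpacking what ``identifiability'' must mean here: since $\sigma^2(\alpha)=e^{-2r}+(e^{2r}-e^{-2r})\sin^2\alpha$ depends only on $\sin^2\alpha$, the two variances coincide not only when $\theta\equiv\theta_0\pmod{\pi}$ but also in the reflection case $\phi(\theta,\check\theta)\equiv-\phi(\theta_0,\check\theta)\pmod{\pi}$, i.e.\ $\check\theta\equiv(\theta+\theta_0)/2+\theta_{\mathrm{opt}}\pmod{\pi/2}$. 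This is a real degeneracy rather than a technicality: for that value of $\check\theta$ the two homodyne output distributions are identical, the left-hand side of Eq.~\eqref{eq:lemma1} equals $1$, and the strict inequality fails for that measurement setting (with consequences for the supremum over $\check\theta$ taken in Lemma~\ref{obs2}). The paper's one-line appeal to identifiability silently absorbs this case, and your argument ultimately defers to the same hypothesis, so neither proof is more complete than the other --- but you have correctly located the one point where the lemma needs an assumption on the model or on the admissible $\check\theta$ rather than pure algebra. If you wanted to close the gap instead of assuming it away, the natural route is the one the paper implicitly relies on downstream: the degenerate $\check\theta$ is a single point (measure zero) for each pair $(\theta,\theta_0)$, and the randomized initial guess together with the concentration of later settings $\check\theta_m$ near $\theta_0$ keeps the adaptive sequence away from it almost surely.
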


\begin{proof}
  \begin{widetext}
\begin{equation*}
  \begin{split}
    &\left[  \frac{ 2  \sigma\left( \phi(\theta, \check{\theta})  \right)  \sigma\left(  \phi(\theta_0, \check{\theta}) \right)  }{ \sigma^2\left( \phi(\theta, \check{\theta})  \right) +  \sigma^2\left( \phi(\theta_0, \check{\theta})  \right)   } \right]^{1/2} < 1 \\
     &\iff  2  \sigma \left( m(\theta, \check{\theta}) \right)  \sigma \left( m(\theta_0, \check{\theta})  \right)    <  \sigma^2\left( m(\theta, \check{\theta})  \right) +  \sigma^2\left(  m(\theta_0, \check{\theta})  \right)\\
    &\iff \left(   \sigma\left( \phi(\theta, \check{\theta})  \right) -  \sigma\left(  \phi(\theta_0, \check{\theta})  \right) \right)^2 > 0,
\end{split}
\end{equation*}
 \end{widetext}
which holds from the identifiability condition.
\end{proof}

\begin{lemma}
  \label{obs2}
  For any $\theta \in \tilde{\Theta}$ with $\theta \neq \theta_0$, it follows
  that:
\begin{equation}
  \label{eq:lemma2}
  g(\theta) = \sup_{\check{\theta} \in \tilde{\Theta}} \mathrm{E}_{\theta_0}\left[ e^{\left(   R(\theta, \theta_0, \check{\theta}) \right)^{1/2}} \right] < 1.
\end{equation}
\end{lemma}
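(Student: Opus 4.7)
The plan is to evaluate the expectation in closed form using the Gaussian structure of the homodyne likelihood in Eq.~\eqref{eq:pdf_homodyne}, recognize it as a power of the Bhattacharyya coefficient that is pointwise controlled by Lemma~\ref{obs1}, and then upgrade the pointwise bound to a strict supremum bound via compactness of $\tilde{\Theta}$.

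First I would use the i.i.d.\ structure of the sample $\vec{X}_{\nu}(\check{\theta})$ within a single adaptive step to factorize
\begin{equation*}
\mathrm{E}_{\theta_0}\!\left[e^{R(\theta,\theta_0,\check{\theta})/2}\right] \;=\; \left(\int \sqrt{f(x\mid \phi(\theta,\check{\theta}))\,f(x\mid \phi(\theta_0,\check{\theta}))}\,dx\right)^{\!\nu},
\end{equation*}
which reduces the computation to the single-shot Bhattacharyya coefficient of two centered Gaussians with variances $\sigma_\theta^2 \equiv \sigma^2(\phi(\theta,\check{\theta}))$ and $\sigma_{\theta_0}^2 \equiv \sigma^2(\phi(\theta_0,\check{\theta}))$. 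A completion-of-the-square calculation then evaluates this integral to $[2\sigma_\theta \sigma_{\theta_0}/(\sigma_\theta^2 + \sigma_{\theta_0}^2)]^{1/2}$, which is precisely the expression that Lemma~\ref{obs1} certifies to be strictly less than $1$ whenever $\theta \neq \theta_0$. At this stage one already has the pointwise strict inequality $\mathrm{E}_{\theta_0}[e^{R/2}] < 1$ for every $\check{\theta} \in \tilde{\Theta}$.

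The hard part is to pass from this pointwise control to a strict supremum bound, since in general pointwise $<1$ does not imply $\sup<1$. I would address this by observing that the map $\check{\theta} \mapsto 2\sigma_\theta \sigma_{\theta_0}/(\sigma_\theta^2 + \sigma_{\theta_0}^2)$ is continuous on $\tilde{\Theta}$, because $\sigma(\cdot)$ is smooth and bounded away from $0$ on the closure of $\tilde{\Theta}$. Compactness of $\tilde{\Theta}$ then guarantees that the supremum is attained at some $\check{\theta}_\star \in \tilde{\Theta}$, and by Lemma~\ref{obs1} this attained value lies strictly below $1$. Raising the base to the power $\nu/2$ preserves the strict inequality and yields $g(\theta) < 1$, as required. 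This uniform bound is exactly what is needed to power a Borel--Cantelli argument in the subsequent proof of almost sure consistency of the adaptive MLE, since it produces geometric decay of the deviation probabilities across adaptive steps.
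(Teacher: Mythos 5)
Your proposal is correct and follows essentially the same route as the paper's proof: evaluate $\mathrm{E}_{\theta_0}\bigl[e^{R/2}\bigr]$ as a Gaussian (Bhattacharyya-type) integral equal to $\bigl[2\sigma_\theta\sigma_{\theta_0}/(\sigma_\theta^2+\sigma_{\theta_0}^2)\bigr]^{\nu/2}$, invoke Lemma~\ref{obs1} for the pointwise strict bound, and use compactness of $\tilde{\Theta}$ to conclude the supremum is attained and hence strictly below $1$. Your explicit appeal to continuity of $\check{\theta}\mapsto 2\sigma_\theta\sigma_{\theta_0}/(\sigma_\theta^2+\sigma_{\theta_0}^2)$ is in fact slightly more careful than the paper's phrasing, which only remarks that the compact set ``contains its supremum.''
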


\begin{proof}
\begin{widetext}
  \begin{equation*}
   \begin{split}
     g(\theta) &=  \sup_{\check{\theta} \in \tilde{\Theta}} \int_{\mathbb{R}^{\nu}} \sqrt{  f\left(\vec{x}_{\nu}(\check{\theta}) \mid \phi(\theta, \check{\theta}) \right) f\left(\vec{x}_{\nu}(\check{\theta}) \mid  \phi(\theta_0, \check{\theta})   \right) } \, d\vec{x}_{\nu}(\check{\theta})\\
               &=\sup_{\check{\theta} \in \tilde{\Theta}} \int_{\mathbb{R}^{\nu}} \frac{ e^{ -\sum_{j=1}^{\nu}\frac{x_j^2}{4} \frac{ \sigma^2\left( \phi(\theta, \check{\theta})\right) + \sigma^2\left( \phi(\theta_0, \check{\theta})\right)  }{ \sigma^2\left( \phi(\theta, \check{\theta})\right) \sigma^2\left( \phi(\theta_0, \check{\theta})\right)  } }}{ \left[    2\pi\sigma\left( \phi(\theta, \check{\theta})\right) \sigma\left( \phi(\theta_0, \check{\theta})  \right) \right]^{\nu/2} } d \vec{x}_{\nu}(\check{\theta})\\
               &= \sup_{\check{\theta} \in \tilde{\Theta}}  \left[  \frac{2 \sigma\left(\phi(\theta_0, \check{\theta}) \right) \sigma\left(\phi(\theta, \check{\theta})\right) }{ \sigma^2\left(\phi(\theta, \check{\theta})\right) + \sigma^2\left(  \phi(\theta_0, \check{\theta}) \right) }   \right]^{\nu/2}.
   \end{split}
\end{equation*}
\end{widetext}
Since $\tilde{\Theta}$ is compact, the set $\tilde{\Theta}$ contains its
supremum. Then there exits a point $\check{\theta}_{\text{sup}} \in
\tilde{\Theta}$ such that
\begin{widetext}
\begin{equation*}
  \sup_{\check{\theta} \in
    \tilde{\Theta}}  \left[  \frac{2 \sigma\left( \phi(\theta, \check{\theta})   \right)
      \sigma\left(\phi(\theta_0, \check{\theta})\right) }{ \sigma^2\left( \phi(\theta, \check{\theta})  \right) +
      \sigma^2\left(  \phi(\theta_0, \check{\theta})  \right) }   \right]^{\nu/2} =  \left[  \frac{2 \sigma\left(  \phi(\theta, \check{\theta}_{\text{sup}})  \right)
      \sigma\left(  \phi(\theta_0, \check{\theta}_{\text{sup}})  \right)   }{  \sigma^2\left(  \phi(\theta, \check{\theta}_{\text{sup}})  \right)  +
      \sigma^2\left( \phi(\theta_0, \check{\theta}_{\text{sup}})  \right)   }   \right]^{\nu/2},
\end{equation*}
\end{widetext}
with $\tilde{\theta}^{\text{sup}} = \theta + \theta_{\text{opt}} -
\check{\theta}_{\text{sup}}$ and $\tilde{\theta}^{\text{sup}}_0 = \theta_0 +
\theta_{\text{opt}} - \check{\theta}_{\text{sup}}$. Therefore the proof of
Eq.~\eqref{eq:lemma2} follows from Eq.~\eqref{eq:lemma1} in \textbf{Lemma}
\textbf{\ref{obs1}}.
\end{proof}

\begin{lemma}
  \label{obs3}
  Let $ \bar{S}^{\epsilon}(\check{\theta}) = \mathrm{E}_{\theta_0}\left[
    e^{R(\epsilon, \theta, \theta_0, \check{\theta})^{1/2}} \right]$ and
  $\bar{S}(\check{\theta}) = \mathrm{E}_{\theta_0}\left[ e^{R(\theta, \theta_0,
      \check{\theta})^{1/2}} \right]$ for any $\theta \in \tilde{\Theta}$ with
  $\theta \neq \theta_0$. Then $\lim_{\epsilon \downarrow 0}
  \bar{S}^{\epsilon}(\check{\theta}) = \bar{S}(\check{\theta})$, and the
  following inequality holds:
 \begin{equation}
   \label{eq:lemma3}
   \lim_{\epsilon \downarrow 0}  \sup_{\check{\theta} \in \tilde{\Theta}} \bar{S}^{\epsilon}(\check{\theta}) < 1.
 \end{equation}
\end{lemma}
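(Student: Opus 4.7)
The plan is to establish the lemma in two stages: first prove pointwise convergence $\bar{S}^{\epsilon}(\check{\theta}) \to \bar{S}(\check{\theta})$ as $\epsilon \downarrow 0$ by a dominated convergence argument, and then upgrade this to uniform convergence on the compact set $\tilde{\Theta}$ via Dini's theorem. Uniform convergence will allow me to exchange the limit in $\epsilon$ with the supremum over $\check{\theta}$, reducing the final bound to $\sup_{\check{\theta}} \bar{S}(\check{\theta}) = g(\theta)$, which is strictly less than $1$ for $\theta \neq \theta_0$ by \Cref{obs2}.

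For the pointwise step, fix $\check{\theta} \in \tilde{\Theta}$. The random variable inside the expectation can be written as $\sup_{\theta' \in N_{\epsilon}(\theta)} \sqrt{f(\vec{X}_{\nu}(\check{\theta}) \mid \phi(\theta',\check{\theta}))/f(\vec{X}_{\nu}(\check{\theta}) \mid \phi(\theta_0,\check{\theta}))}$. This quantity is monotone non-increasing as $\epsilon \downarrow 0$ (the supremum is taken over a shrinking set), and by the continuity of the Gaussian density in \Cref{eq:pdf_homodyne} through its variance $\sigma^2(\phi(\theta',\check{\theta}))$ in \Cref{eq:var_homodyne}, it converges a.s.\ to $\sqrt{f(\cdot \mid \phi(\theta,\check{\theta}))/f(\cdot \mid \phi(\theta_0,\check{\theta}))} = e^{R(\theta,\theta_0,\check{\theta})^{1/2}}$. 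A fixed $\epsilon_0 > 0$ with $N_{\epsilon_0}(\theta) \subset \tilde{\Theta}$ provides a monotone envelope, and because $\sigma^2(\phi(\theta',\check{\theta}))$ is continuous on the compact set $\overline{N_{\epsilon_0}(\theta)} \times \tilde{\Theta}$, it is bounded uniformly above and below away from zero. This bounds the quadratic exponent of the Gaussian ratio above by a quadratic in $\vec{X}_{\nu}$ with $f(\cdot \mid \phi(\theta_0,\check{\theta}))$-integrable exponential, furnishing the dominating function required by the dominated convergence theorem. This yields $\bar{S}^{\epsilon}(\check{\theta}) \to \bar{S}(\check{\theta})$.

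To promote pointwise to uniform convergence, I will first observe that $\bar{S}(\check{\theta})$ equals the explicit continuous expression derived in the proof of \Cref{obs2}, which is continuous in $\check{\theta}$ on $\tilde{\Theta}$. A similar dominated convergence argument, now in the variable $\check{\theta}$ rather than $\epsilon$, shows that each $\bar{S}^{\epsilon}$ is also continuous on $\tilde{\Theta}$. The net $\{\bar{S}^{\epsilon}\}_{\epsilon > 0}$ is monotone decreasing as $\epsilon \downarrow 0$, consists of continuous functions, converges pointwise to the continuous $\bar{S}$, and lives on the compact set $\tilde{\Theta}$, so Dini's theorem applies and the convergence is uniform. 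Hence $\lim_{\epsilon \downarrow 0} \sup_{\check{\theta} \in \tilde{\Theta}} \bar{S}^{\epsilon}(\check{\theta}) = \sup_{\check{\theta} \in \tilde{\Theta}} \bar{S}(\check{\theta}) = g(\theta) < 1$ by \Cref{obs2}.

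The main obstacle is the construction of the integrable dominating function for the supremum inside the integral, because the supremum of Gaussian ratios over $\theta' \in N_{\epsilon_0}(\theta)$ can, for different values of $\vec{X}_{\nu}$, be attained at different $\theta'$, and the log-ratio of two zero-mean Gaussians is quadratic in $x$ with a coefficient $\tfrac12(1/\sigma^2(\phi(\theta_0,\check{\theta})) - 1/\sigma^2(\phi(\theta',\check{\theta})))$ that can have either sign. The compactness of $\tilde{\Theta}$ and $N_{\epsilon_0}(\theta)$, together with the strict positivity and continuity of $\sigma^2$, will be used to bound this coefficient uniformly and exhibit an explicit Gaussian dominator. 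The remaining tasks, namely the monotonicity of $\bar{S}^{\epsilon}$ in $\epsilon$ and the continuity of both $\bar{S}$ and $\bar{S}^{\epsilon}$ in $\check{\theta}$, follow routinely once this domination is in place.
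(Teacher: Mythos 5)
Your proposal is correct and follows essentially the same route as the paper's proof: pointwise convergence of $\bar{S}^{\epsilon}$ to $\bar{S}$, continuity of both on the compact set $\tilde{\Theta}$, Dini's theorem to upgrade to uniform convergence, exchange of the limit with the supremum, and then \textbf{Lemma \ref{obs2}} to conclude $g(\theta)<1$. Your explicit construction of an integrable Gaussian dominating function for the supremum inside the integral is a welcome tightening of a step the paper passes over with a bare appeal to continuity, but it is a refinement of the same argument rather than a different one.
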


\begin{proof}
  Given that
  \begin{equation*}
    \scalemath{0.785}{\bar{S}^{\epsilon}(\check{\theta}) =  \int_{\mathbb{R}^{\nu}} \sup_{\theta' \in N_{\epsilon}(\theta) } \sqrt{ f\left(\vec{x}_{\nu}(\check{\theta}) \mid \phi(\theta', \check{\theta}) \right) f\left(\vec{x}_{\nu}(\check{\theta}) \mid  \phi(\theta_0, \check{\theta}) \right)} d\vec{x}_{\nu}(\check{\theta}),}
  \end{equation*}
  by the continuity of the density function, it follows that $\lim_{\epsilon \downarrow 0}
  \bar{S}^{\epsilon}(\check{\theta}) = \bar{S}(\check{\theta})$ for any $\theta
  \in \tilde{\Theta}$. Moreover, the closure of $N_{\epsilon}(\theta)$,
  $\overline{N_{\epsilon}(\theta)} \subset \tilde{\Theta}$ is compact, because
  $\tilde{\Theta}$ itself is a compact set. Consequently,
  \begin{equation*}
    \scalemath{0.785}{\bar{S}^{\epsilon}(\check{\theta}) =  \int_{\mathbb{R}^{\nu}} \max_{\theta' \in \overline{N_{\epsilon}(\theta)} }\sqrt{ f\left(\vec{x}_{\nu}(\check{\theta}) \mid \phi(\theta', \check{\theta}) \right) f\left(\vec{x}_{\nu}(\check{\theta}) \mid \phi(\theta_0, \check{\theta}) \right) }d\vec{x}_{\nu}(\check{\theta}).}
  \end{equation*}

  We note that the function $$\sqrt{ f\left(\vec{x}_{\nu}(\check{\theta}) \mid
      \phi(\theta', \check{\theta}) \right) f\left(\vec{x}_{\nu}(\check{\theta})
      \mid \phi(\theta_0, \check{\theta}) \right) }$$ is continuous over the
  Cartesian product of the compact sets $\overline{N_{\epsilon}(\theta)} \times
  \tilde{\Theta}$, which is also compact. Then the integrand in
  $\bar{S}^{\epsilon}(\check{\theta})$ is a continuous function at
  $\check{\theta}$ \cite{rudin1965,Fujiwara2006}. Therefore, the sequence of
  functions $\left( \bar{S}^{\epsilon}(\check{\theta}) \right)_{\epsilon
    \downarrow 0}$ forms a monotonically decreasing sequence of continuous
  functions defined on the compact set $\tilde{\Theta}$.

  By Dini's theorem \cite{rudin1965}, the convergence from
  $\bar{S}^{\epsilon}(\check{\theta}) \to \bar{S}(\check{\theta})$ as $\epsilon
  \downarrow 0$ is uniform in $\theta$. As a consequence,
  \begin{equation*}
    \lim_{\epsilon \downarrow 0}  \sup_{\theta \in \tilde{\Theta}} \bar{S}^{\epsilon}(\check{\theta}) =   \sup_{\theta \in \tilde{\Theta}} \lim_{\epsilon \downarrow 0} \bar{S}^{\epsilon}(\check{\theta}) = \sup_{\theta \in \tilde{\Theta}} \bar{S}(\check{\theta}) = g(\theta),
  \end{equation*}
  which, according to Eq.~\eqref{eq:lemma2} in \textbf{Lemma \ref{obs2}}, is
  less than $1$.
\end{proof}

\begin{lemma}
  \label{obs4}
  For any $\theta \in \tilde{\Theta}$ with $\theta \neq \theta_0$, there exist
  $\epsilon > 0$ and $b > 0$ such that for any $m \in \mathbb{N}$, and any set
  of homodyne measurements parameterized by $\left\{ \check{\theta}_i
  \right\}_{1 \leq i \leq m} \subset \tilde{\Theta}$, the following inequality
  holds
  \begin{equation}
    \label{eq:lemma4}
    P_{\theta_0}\left( \sum_{i=1}^m R(\epsilon, \theta, \theta_0,  \check{\theta}_i) > 0 \right) \leq e^{-bm}, \, m \geq 1.
  \end{equation}
\end{lemma}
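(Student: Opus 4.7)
The plan is to prove this exponential decay via a Chernoff-type argument, applying Markov's inequality to the exponential moment of the sum $\sum_{i=1}^m R(\epsilon,\theta,\theta_0,\check{\theta}_i)$ and then iteratively bounding the resulting product using the uniform strict inequality from \textbf{Lemma \ref{obs3}} together with the adaptive (martingale) structure of the sampling scheme.

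First, I would invoke \textbf{Lemma \ref{obs3}} to select an $\epsilon > 0$ small enough that $\rho := \sup_{\check{\theta} \in \tilde{\Theta}} \bar{S}^{\epsilon}(\check{\theta}) < 1$, and set $b := -\log \rho > 0$; this is the exponent that will appear in the bound. Next, applying Markov's inequality with the map $x \mapsto e^{x/2}$ (chosen to match the Bhattacharyya form used implicitly in \textbf{Lemmas \ref{obs2}} and \textbf{\ref{obs3}}), one obtains
\begin{equation*}
P_{\theta_0}\!\left(\sum_{i=1}^m R(\epsilon,\theta,\theta_0,\check{\theta}_i) > 0\right) \leq \mathrm{E}_{\theta_0}\!\left[\prod_{i=1}^m e^{R(\epsilon,\theta,\theta_0,\check{\theta}_i)/2}\right].
\end{equation*}
The product would then be handled by recursive conditioning: since in the adaptive scheme $\check{\theta}_i$ is measurable with respect to the $\sigma$-algebra $\mathcal{F}_{i-1}$ generated by the samples from the first $i-1$ adaptive steps, while $\vec{X}_\nu(\check{\theta}_i)$ is conditionally independent of $\mathcal{F}_{i-1}$ given $\check{\theta}_i$, the tower property gives $\mathrm{E}_{\theta_0}[e^{R(\epsilon,\theta,\theta_0,\check{\theta}_m)/2} \mid \mathcal{F}_{m-1}] = \bar{S}^{\epsilon}(\check{\theta}_m) \leq \rho$. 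Peeling off the factors from $i=m$ down to $i=1$ then yields $\mathrm{E}_{\theta_0}[\prod_{i=1}^{m} e^{R/2}] \leq \rho^{m} = e^{-bm}$, which is the desired bound.

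The main obstacle is handling the non-independence introduced by adaptivity: the settings $\check{\theta}_i$ are data-dependent, so the standard i.i.d. Chernoff bound does not apply directly. This is precisely why the supremum over $\check{\theta}$ inside \textbf{Lemma \ref{obs3}} is indispensable; a pointwise bound $\bar{S}^{\epsilon}(\check{\theta}) < 1$ alone would not suffice, because the random realization of $\check{\theta}_i$ could in principle concentrate near points where the bound degenerates. The uniform strict inequality $\rho < 1$ over the compact set $\tilde{\Theta}$ is exactly what drives the recursion and preserves the exponential rate at every step, regardless of the realized values of $\check{\theta}_i$. A minor point to verify is that $x \mapsto e^{x/2}$ is the right Markov transform: this is natural here because $\mathrm{E}_{\theta_0}[e^{R/2}]$ is the Bhattacharyya affinity whose supremum \textbf{Lemma \ref{obs3}} already certifies to be strictly less than $1$.
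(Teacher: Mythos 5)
Your proposal is correct and follows essentially the same route as the paper: Markov's inequality applied to $\prod_i e^{R(\epsilon,\theta,\theta_0,\check{\theta}_i)/2}$, followed by recursive conditioning through the natural filtration and the uniform bound $\sup_{\check{\theta}}\bar{S}^{\epsilon}(\check{\theta}) = e^{-b} < 1$ from Lemma~\ref{obs3}, peeling off one factor per adaptive step. If anything, your explicit statement that $\mathrm{E}_{\theta_0}[e^{R/2}\mid\mathcal{F}_{m-1}] = \bar{S}^{\epsilon}(\check{\theta}_m)\leq\rho$ is a slightly cleaner formulation of the supermartingale step than the paper's.
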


\begin{proof}
  We start by observing that
  \begin{equation*}
    \label{eq:lem_markov}
    \scalemath{0.92}{
    \begin{split}
      P_{\theta_0}\left[  \sum_{i=1}^m R(\epsilon, \theta, \theta_0,  \check{\theta}_i) > 0 \right] &= P_{\theta_0}\left[  e^{\sum_{i=1}^m R(\epsilon, \theta, \theta_0,  \check{\theta}_i)} > 1 \right] \\
                                                                                                    &= P_{\theta_0}\left[  e^{\frac{1}{2}\sum_{i=1}^m R(\epsilon, \theta, \theta_0,  \check{\theta}_i)} > 1 \right] \\
                                                                                                    &= P_{\theta_0}\left[ \prod_{i=1}^{m} e^{ R(\epsilon, \theta, \theta_0,  \check{\theta}_i)/2} > 1 \right].
    \end{split}}
  \end{equation*}
  Applying the Markov's inequality, we obtain
  \begin{equation} \label{eq:markov}
    P_{\theta_0}\left[ \prod_{i=1}^{m} e^{ R(\epsilon, \theta, \theta_0,  \check{\theta}_i)/2} > 1 \right] \leq \mathrm{E}_{\theta_0}\left[ \prod_{i=1}^{m} e^{ R(\epsilon, \theta, \theta_0,  \check{\theta}_i)/2} \right].
  \end{equation}
  We note that Eq.~\eqref{eq:lemma3} implies that there exist
  sufficiently small $\epsilon > 0$ and $b > 0$ such that
\begin{equation}
  \label{eq:lemma4_II}
\sup_{\check{\theta} \in \tilde{\Theta}} \bar{S}^{\epsilon}(\check{\theta})  = e^{-b} < 1.
\end{equation}

Let $Z_i = e^{ R(\epsilon, \theta, \theta_0, \check{\theta}_i)/2}$ for $1 \leq i
\leq m$, and $Z_0 = 1$. We can define a stochastic process
\begin{equation*}
Y_m = \prod_{i=0}^{m} Z_m, \, m \geq 0.
 \end{equation*}
 This sequence of independent and non-negative random variables $\left\{ Y_m
 \right\}_{m \geq 0}$ forms an adapted stochastic process relative to the
 filtered space $\left( \Omega, \mathcal{F}, \left\{ \mathcal{F}_m \right\}, P
 \right)$, where $\left\{ \mathcal{F}_m; m \geq 0 \right\}$ is the natural
 filtration, $\mathcal{F}_0 := \left( \emptyset, \Omega \right)$, and
 $\mathcal{F}_m := \sigma\left( \vec{X}_{\nu}(\check{\theta}_1)
   ,\vec{X}_{\nu}(\check{\theta}_2),\ldots,\vec{X}_{\nu}(\check{\theta}_m)
 \right)$ \cite{Williams1991}. Furthermore, according to Eq.~\eqref{eq:lemma1},
 this stochastic process decreases on average, almost surely (a.s.), as
 indicated by
\begin{equation*}
  \mathrm{E}_{\theta_0}\left[ Z_m \mid \mathcal{F}_{m-1} \right] \leq Z_{m-1}, \, m \geq 1,
\end{equation*}
where the conditional expectation is well defined, since $Y_m$ is a
$\mathcal{F}_m$-measurable function. Consequently, the sequence $\left\{ Y_m
\right\}_{m \geq 0}$ consists of non-negative independent random variables that
satisfies the supermartingale condition \cite{Williams1991}. Hence by the tower
property of supermartingales, together with Eq.~\eqref{eq:lemma4_II}, for $m
\geq 1$ and for a sufficiently small $\epsilon > 0$, we have
\begin{equation*}
  \begin{split}
    \mathrm{E}_{\theta_0}\left[ Y_m \mid \mathcal{F}_{m-1} \right] &= \mathrm{E}_{\theta_0}\left[ Y_{m-1}Z_m \mid \mathcal{F}_{m-1} \right]\\
                                                                   &= Y_{m-1} \mathrm{E}_{\theta_0}\left[ Z_m \mid \mathcal{F}_{m-1} \right]\\
                                                                   &\leq Y_{m-1} e^{-b}.
 \end{split}
\end{equation*}
The iteration of this expectation through the filtration levels yields:
\begin{equation*}
\mathrm{E}_{\theta_0}\left[ Y_m \right] = \mathrm{E}_{\theta_0}\left[ Y_m \mid \mathcal{F}_{m-1} \mid \mathcal{F}_{m-2} \mid \cdots \mid \mathcal{F}_0 \right] \leq e^{-bm}.
\end{equation*}
Finally, incorporating this inequality into Eq.~\eqref{eq:markov}, we conclude
the proof.
\end{proof}

\begin{theorem}[Asymptotic strong consistency] \label{MainTheorem} Let $\theta_0
  \in \tilde{\Theta}$ be the true value of the phase, and be an interior point
  of $\tilde{\Theta}$. Consider $\left\{
    \widehat{\theta}(\vec{X}_{\nu}\left(\check{\theta}_1),\ldots,\vec{X}_{\nu}(\check{\theta}_m)\right)
  \right\}_{1 \leq i \leq m}$ a sequence of MLEs over $m$ adaptive steps,
  defined by AQSE over the set of homodyne measurements $\left\{
    M_{\check{\theta}_i}(dx) \right\}_{1 \leq 1 \leq m}$. Then,
  \begin{equation}
    \label{eq:theom_1}
    \widehat{\theta}\left(\vec{X}_{\nu}(\check{\theta}_1),\ldots,\vec{X}_{\nu}(\check{\theta}_m)\right)  \xrightarrow{\text{a.s.}}  \theta_0 \text{ as } m \to \infty.
  \end{equation}
\end{theorem}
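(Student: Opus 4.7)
The plan is to reduce the statement to a Borel--Cantelli argument by combining \textbf{Lemma \ref{obs4}} with a compactness argument in the spirit of Wald's classical consistency proof. Almost sure convergence follows if, for every $\delta > 0$, the series $\sum_{m=1}^{\infty} P_{\theta_0}\left(\lvert \widehat{\theta}_m - \theta_0 \rvert \geq \delta\right)$ is finite, so the task reduces to producing a summable (indeed exponential) bound on this probability.

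First, I would fix $\delta > 0$ and consider the complement $K_\delta = \tilde{\Theta} \setminus N_{\delta}(\theta_0)$, which is compact as a closed subset of $\tilde{\Theta}$. For every $\theta \in K_\delta$, \textbf{Lemma \ref{obs4}} provides an open neighborhood $N_{\epsilon_\theta}(\theta)$ and a constant $b_\theta > 0$ such that the exponential bound in Eq.~\eqref{eq:lemma4} holds uniformly over any sequence of settings $\{\check{\theta}_i\}$. By compactness, the open cover $\{N_{\epsilon_\theta}(\theta) : \theta \in K_\delta\}$ admits a finite subcover $\{N_{\epsilon_{j}}(\theta_j)\}_{j=1}^{k}$ with associated constants $b_j$ and $b = \min_j b_j > 0$.

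Second, I would translate the event $\{\widehat{\theta}_m \in K_\delta\}$ into a statement about the log-likelihood ratios $R$. By the defining property of the MLE, $\sum_i R(\widehat{\theta}_m, \theta_0, \check{\theta}_i) \geq 0$. On the event $\{\widehat{\theta}_m \in N_{\epsilon_j}(\theta_j)\}$, the definition of $R(\epsilon, \cdot, \cdot, \cdot)$ as a supremum over $N_\epsilon(\theta)$ gives $\sum_i R(\epsilon_{j}, \theta_j, \theta_0, \check{\theta}_i) \geq \sum_i R(\widehat{\theta}_m, \theta_0, \check{\theta}_i) \geq 0$. A union bound across $j$ and an application of \textbf{Lemma \ref{obs4}} then yield
\begin{equation*}
    P_{\theta_0}\left(\widehat{\theta}_m \in K_\delta\right) \leq \sum_{j=1}^{k} e^{-b_j m} \leq k\, e^{-bm},
\end{equation*}
which is summable in $m$. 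Borel--Cantelli gives $|\widehat{\theta}_m - \theta_0| < \delta$ eventually almost surely, and since $\delta$ is arbitrary, Eq.~\eqref{eq:theom_1} follows.

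The main subtlety is the adaptive nature of the protocol: the settings $\check{\theta}_i$ are random and $\mathcal{F}_{i-1}$-measurable, so naively one might worry that the product of likelihood ratios no longer behaves like an i.i.d.\ product. However, \textbf{Lemma \ref{obs4}} was deliberately formulated uniformly in $\{\check{\theta}_i\}_{i=1}^m \subset \tilde{\Theta}$, and its proof exploits precisely the supermartingale structure of $Y_m = \prod_i Z_i$ relative to the natural filtration to absorb this dependence; this is what justifies using the same exponential bound at each adaptive step here. Two further technical points worth checking are (i) that $\theta_0$ being an interior point of $\tilde{\Theta}$ ensures the MLE can actually be approached from both sides (avoiding the boundary pathology addressed by the modified estimator in Eq.~\eqref{eq:estimator_proposal}), and (ii) that the open cover can be chosen so each $\epsilon_j$ is small enough for \textbf{Lemma \ref{obs3}} to give $\sup_{\check{\theta}} \bar{S}^{\epsilon_j}(\check{\theta}) < 1$, which is the input that \textbf{Lemma \ref{obs4}} needs; both hold under the regularity conditions collected in Appendix~\ref{sec:MLE}.
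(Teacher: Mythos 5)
Your proposal is correct and follows essentially the same route as the paper's own proof: a Wald-style compactness argument covering $\tilde{\Theta}\setminus N_{\delta}(\theta_0)$ by finitely many neighborhoods on which \textbf{Lemma \ref{obs4}} applies, a union bound yielding an exponentially decaying (hence summable) tail probability, and Borel--Cantelli to conclude almost sure convergence. Your explicit remarks on the adaptive/supermartingale subtlety and on choosing the $\epsilon_j$ small enough for \textbf{Lemma \ref{obs3}} correctly identify where the paper's auxiliary lemmas carry the load.
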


\begin{proof}
  Let
  \begin{equation*}
    S_m(\theta_0, \theta) = \sum_{i=1}^{m} R(\theta, \theta_0,  \check{\theta}_i)
  \end{equation*}
  and
  \begin{equation*}
    S^{\epsilon}_m(\theta_0, \theta) = \sum_{i=1}^{m}R(\epsilon, \theta, \theta_0, \check{\theta}_i).
  \end{equation*}
  For any $a \geq 0$, let
  \begin{equation*}
    N_{a}(\theta_0)^{c} = \left\{ \theta \in \tilde{\Theta} : \lvert \theta -
      \theta_0 \rvert \geq a \right\}.
  \end{equation*}
  Given that $\tilde{\Theta}$ is compact, then $N_{a}(\theta_0)^{c}$ is also
  compact. Then, for every arbitrary collection $\mathcal{K}$ of open intervals of
  $\tilde{\Theta}$, such that
  $$N_{a}(\theta_0)^{c} \subseteq \cup_{K \in \mathcal{K}}
  K,$$ there exits a finite subcollection $\mathcal{J} \subseteq \mathcal{K}$ such
  that
  $$N_{a}(\theta_0)^{c} \subseteq \cup_{J \in \mathcal{J}} J.$$
  Each $J \in \mathcal{J}$ can be represented as a neighborhood $N_{\epsilon_i}(\theta_i)$
  for some $\theta_i \in \tilde{\Theta}$ and $\epsilon_i > 0$. Therefore, for
  some finite set $\left\{ \theta_1, \ldots, \theta_j \right\}$,
  $$N_{a}(\theta_0)^{c} \subseteq \cup_{\theta \in \left\{ \theta_1, \ldots,
      \theta_j \right\}} N_{\epsilon_i}\left( \theta_i \right),$$ where the
  $\epsilon_i$'s are chosen such that Eq.~\eqref{eq:lemma4} in \textbf{Lemma
    \ref{obs4}} holds for every $\theta_i$ and its corresponding $b_i$, with $i
  = 1,\ldots,j$.

  The next step in this proof consists of bounding the probability that the
  estimator $ \widehat{\theta}$ after $m$ adaptive measurements is at a distance
  larger than or equal to $a$ from the true value $\theta_0 $, i.e.,
$$\Big\lvert
\widehat{\theta}\left(\vec{X}_{\nu}(\check{\theta}_1)
,\ldots,\vec{X}_{\nu}(\check{\theta}_m) \right) - \theta_0 \Big\rvert \geq a.$$ This
occurs when the maximum of the likelihood function $L_m(\theta) =
\prod_{i=1}^{m} f( \vec{X}_{\nu}(\check{\theta}_i) \mid \phi(\theta, \check{\theta}_i) )$
belongs to $N_a(\theta_0)^c$. This would imply that $L_m(\widehat{\theta}) >
L(\theta_0)$ and hence $$ \sup_{\theta \in N_{a}(\theta_0)^c} \prod_{i=1}^{m}
\frac{ f\left( \vec{X}_{\nu}(\check{\theta}_i) \mid \phi(\theta, \check{\theta}_i) \right)}{ f\left(
  \vec{X}_{\nu}(\check{\theta}_i) \mid \phi(\theta_0, \check{\theta}_i) \right) } > 1.$$ Thus
  \begin{widetext}
  \begin{equation*}
    \begin{split}
      P_{\theta_0}\left(  \Big\lvert \widehat{\theta}\left(\vec{X}_{\nu}(\check{\theta}_1) ,\ldots,\vec{X}_{\nu}(\check{\theta}_m) \right) - \theta_0 \Big\rvert  \geq a \right) 
                                                                                            &=P_{\theta_0}\left( \sup_{\theta \in N_{a}(\theta_0)^c} \prod_{i=1}^m
                                                                                              \left[ \frac{ f\left(\vec{X}_{\nu}(\check{\theta}_i) \mid \phi(\theta, \check{\theta}_i) \right)  }{f\left(\vec{X}_{\nu}(\check{\theta}_i) \mid \phi(\theta_0, \check{\theta}_i) \right)   } \right] > 1 \right)\\
                                                                                            &= P_{\theta_0}\left( \sup_{\theta \in N_{a}(\theta_0)^c} \sum_{i=1}^m
                                                                                              \log \left[ \frac{ f\left(\vec{X}_{\nu}(\check{\theta}_i) \mid \phi(\theta, \check{\theta}_i) \right)  }{f\left(\vec{X}_{\nu}(\check{\theta}_i)  \mid \phi(\theta_0, \check{\theta}_i) \right)  } \right] > 0 \right) \\
                                                                                            &= P_{\theta_0}\left(  \sup_{\theta \in N_{a}(\theta_0)^c}  S_m(\theta_0,\theta) > 0   \right) \\ &  \leq P_{\theta_0}\left( \max_{1 \leq i \leq j} S_m^{\epsilon_i}(\theta_0,\theta_i) > 0 \right)\\
                                                                                            &\leq j \exp\left[ - \min_{1 \leq i \leq j}[b_i] \, m \right], \, m \geq 1.
   \end{split}
 \end{equation*}
\end{widetext}
In the third step of the previous sequence of equations, we have used the fact
that the logarithm is a strictly increasing function, and that $\tilde{\Theta}$
is compact, to commute the supremum with the logarithm. To bound the probability
$P_{\theta_0}$ in the last step of the sequence of equations, we have used
Eq.~\eqref{eq:lemma4} in \textbf{Lemma \ref{obs4}}.

Finally, by the Borell Cantelli lemma \cite{Chandra}, we conclude that
\begin{equation*}
  P_{\theta_0}\left( \Big\lvert \widehat{\theta}\left(\vec{X}_\nu(\check{\theta}_1) ,\ldots,\vec{X}_\nu(\check{\theta}_m) \right)  - \theta_0 \Big\rvert \geq a \quad \text{i.o.}  \right) = 0,
\end{equation*}
were $\text{i.o.} $ stands for \textit{infinitely often}. Therefore,
$\widehat{\theta}\left(\vec{X}_\nu(\check{\theta}_1) ,\ldots,\vec{X}_\nu(\check{\theta}_m) \right)
\xrightarrow{a.s.} \theta_0$ as $m \to \infty$.
\end{proof}

\begin{corollary}
  $\widehat{\theta}\left(\vec{X}_{\nu}(\check{\theta}_1) ,\ldots,\vec{X}_{\nu}(\check{\theta}_m) \right) =  \theta_0
 + o_{P}(1)$.
\end{corollary}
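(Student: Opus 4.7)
The plan is direct: this corollary is a reformulation of Theorem~\ref{MainTheorem} in little-$o_P$ notation, invoking only the classical fact that almost-sure convergence implies convergence in probability. Accordingly, I would not introduce any new machinery beyond what has already been established in the proof of the theorem.

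First I would define $Y_m := \widehat{\theta}\left(\vec{X}_{\nu}(\check{\theta}_1),\ldots,\vec{X}_{\nu}(\check{\theta}_m)\right) - \theta_0$. Theorem~\ref{MainTheorem} yields $Y_m \xrightarrow{\mathrm{a.s.}} 0$ under $P_{\theta_0}$ as $m \to \infty$. Next, I would invoke the standard implication that almost-sure convergence entails convergence in probability: for every fixed $\varepsilon > 0$,
\begin{equation*}
\lim_{m \to \infty} P_{\theta_0}\left( |Y_m| > \varepsilon \right) = 0.
\end{equation*}
This is precisely the definition of $Y_m = o_P(1)$, so rearranging gives $\widehat{\theta}\left(\vec{X}_{\nu}(\check{\theta}_1),\ldots,\vec{X}_{\nu}(\check{\theta}_m)\right) = \theta_0 + o_P(1)$, which is the stated claim.

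The main obstacle is essentially non-existent: the corollary is the consistency statement of Theorem~\ref{MainTheorem} rephrased in stochastic-order notation. The only subtlety worth flagging is that $o_P(1)$ must be understood with respect to the measure $P_{\theta_0}$, so the statement should be interpreted pointwise for every interior true phase $\theta_0 \in \tilde{\Theta}$, matching the hypothesis of the theorem. This reformulation is useful because the $o_P(1)$ form is the natural input for subsequent asymptotic-normality arguments — specifically, a Taylor expansion of the score function around $\theta_0$ together with a central-limit argument — that ultimately establish saturation of the QCRB.
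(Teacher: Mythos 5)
Your proposal is correct and follows exactly the paper's own route: both define the deviation $\widehat{\theta}-\theta_0$, invoke the almost-sure convergence from Theorem~\ref{MainTheorem}, and apply the standard fact that almost-sure convergence implies convergence in probability to conclude the $o_P(1)$ statement. No meaningful difference from the paper's argument.
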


\begin{proof}
  Let $(W_m)_{m \geq 1}$ be a sequence of random variables. If $W_m = o_p(1)$,
  then the stochastic sequence $(W_m)_{m \geq 1}$ converges in probability to
  $0$ \cite{bishop2007discrete}. It implies that with arbitrary high
  probability, $\lvert W_m \rvert = o(1)$. Specifically, for any $\epsilon,
  \delta > 0$, there exists $N_{0}(\epsilon, \delta)$, such that for any $m
  >N_{0}(\epsilon, \delta)$,
  \begin{equation}
    P\left( \lvert W_m \rvert < \epsilon \right) \geq 1-\delta.
  \end{equation}
  On the other hand, Eq.~\eqref{eq:theom_1} in \textbf{Theorem
    \ref{MainTheorem}} guarantees the almost sure convergence for the sequence
  of MLEs $\left(   \widehat{\theta}\left(\vec{X}_{\nu}(\check{\theta}_1)
    ,\ldots,\vec{X}_{\nu}(\check{\theta}_m) \right) \right)_{m \geq 1}$ to $\theta_0$. Almost
  sure convergence implies convergence in probability \cite{Keener2010}.
  Therefore, we can conclude that $\widehat{\theta}\left(\vec{X}_{\nu}(\check{\theta}_1)
  ,\ldots,\vec{X}_{\nu}(\check{\theta}_m) \right) - \theta_{0}= o_P(1)$ as the number of adaptive
  steps $m$ increases.
\end{proof}

This means that the sequence of MLEs
$\left(\widehat{\theta}(\vec{X}_{\nu}(\check{\theta}_1)
  ,\ldots,\vec{X}_{\nu}(\check{\theta}_m) ) \right)_{m \geq 1}$ gets arbitrarily
close to $\theta_0$ with increasing probability as $m$ becomes very large
\cite{bishop2007discrete}.


 \subsubsection{\label{Appendix_AsypmNor} Saturation of the QCRB}

 A consequence of the \textit{almost sure convergence} of the MLE described in
 Appendix~\ref{Appendix_Proof_s1} is that the distribution of the sequence of
 MLEs converges to a normal distribution (asymptotic normality) with variance
 equal to the inverse of the QFI as the number of adaptive steps tends to
 infinity. To prove this statement we follow a similar methodology as in the
 case of i.i.d. random variables \cite{Keener2010}. However, in our case, the
 measurement samples $\vec{X}_{\nu}(\check{\theta}_1), \ldots,
 \vec{X}_{\nu}(\check{\theta}_m)$ are not identically distributed. Moreover,
 this proof requires the additional assumption that the likelihood functions
 $f(\vec{X}_{\nu}(\check{\theta}) \mid \phi(\theta, \check{\theta}) )$ are
 sufficiently smooth and continuous on $\mathbb{R}^{\nu} \times \tilde{\Theta}
 \times \tilde{\Theta}$. This assumption ensures that the existence of the
 Fisher information at every point in $\mathbb{R}^{\nu} \times \tilde{\Theta}
 \times \tilde{\Theta}$.

 Under the regularity conditions described in Appendix~\ref{sec:MLE} and with
 the additional assumption of sufficient smoothness of the likelihood functions,
 we expand the derivative of the logarithm of the likelihood $l_m(\theta) =
 \sum_{i=1}^{m} \log \left[ f(\vec{X}_{\nu}(\check{\theta}_i) \mid \phi(\theta,
   \check{\theta}_i) ) \right]$ about $\theta_0$ up to second order, yielding
\begin{equation}
  \label{eq:as_nor}
  l'_m(\theta) = l'_m(\theta_0) + l''_m(\theta_0)(\theta - \theta_0) + \frac{1}{2}l_m'''(\theta^*)(\theta - \theta_0)^2,
\end{equation}
where $\theta^* \in \tilde{\Theta}$ such that $|\theta^* - \theta_0| < |\theta -
\theta_0|$. Here, the regularity conditions in Appendix:~\ref{sec:MLE} ensure
that the estimates from $\widehat{\theta}(\vec{X}_{\nu}(\check{\theta}_1)
,\ldots,\vec{X}_{\nu}(\check{\theta}_m) )$ are stationary points of the
likelihood function. Then, when evaluating Eq.~\eqref{eq:as_nor} at $\theta =
\widehat{\theta}(\vec{X}_\nu(\check{\theta}_1)
,\ldots,\vec{X}_{\nu}(\check{\theta}_m) )$, the left-hand side of this equation
vanishes, yielding
\begin{widetext}
\begin{equation}
  \label{eq:as_nor_2}
  \sqrt{m}\left( \widehat{\theta}(\vec{X}_\nu(\check{\theta}_1)
    ,\ldots,\vec{X}_\nu(\check{\theta}_m)) - \theta_0   \right) = \frac{ \frac{ l_m'(\theta_0)}{\sqrt{m}}   }{ - \frac{l_m''(\theta_0)}{m} - \frac{l'''_m(\theta^*)}{2m} \left( \widehat{\theta}(\vec{X}_\nu(\check{\theta}_1)
      ,\ldots,\vec{X}_\nu(\check{\theta}_m) ) - \theta_0 \right) }.
\end{equation}
\end{widetext}
Moreover, given that the MLE is asymptotically consistent, i.e.,
$\widehat{\theta}\left(\vec{X}_\nu(\check{\theta}_1)
  ,\ldots,\vec{X}_\nu(\check{\theta}_m)\right) \xrightarrow{\text{a.s.}}
\theta_0$, and the smoothness of the likelihood, it can be shown that
\cite{Fujiwara2006}:
\begin{itemize}
\item[(a)] $l'_m(\theta_0)/\sqrt{m} \to \mathcal{N}(0,
  1/\nu F_X(\theta_{\text{opt}}))$ in distribution,
\item[(b)] (b) $l_m''(\theta_0)/m \to
  - \nu F_X(\theta_{\text{opt}})$ in probability, and
\item[(c)] $l'''_m(\theta^*)/m$ is bounded
  in probability.
\end{itemize}

This results in the asymptotic normality of the estimator:
\begin{equation}
	\label{eq:asymptotic_normal_3}
	\sqrt{m} \left(  \widehat{\theta}\left(   \vec{X}_\nu(\check{\theta}_1) ,\ldots,\vec{X}_\nu(\check{\theta}_m) \right) \right)  \xrightarrow{\text{d}} \mathcal{N}\left( \theta_0, \frac{1}{ \nu F_Q} \right),
\end{equation}
where the $d$ above the arrow denotes convergence in distribution. Therefore,
the distribution of the estimator follows a normal distribution with mean equal
to $\theta_0$ and a variance equal to the inverse of the QFI, and therefore
saturates the QCRB $[N F_{X}(\theta_{\text{opt}})]^{-1}$ in the asymptotic limit
of $m \to \infty$, with $N=\nu \times m$.

We can give the bound for the rate of convergence using asymptotic normality.
Similar to how the definition of $o_p$ states the convergence in probability, if
a sequence of random variables $W_m = O_p(a_m)$, then with high probability,
$|W_m| = O(a_m)$ (i.e. for sufficiently large $m$, the sequence $W_m$ is bounded
above by a constant multiple of the sequence $a_m$ in probability)
\cite{bishop2007discrete}. Specifically, for every $\epsilon >0$ there exists a
constant $K(\epsilon)$ and an integer $N_{0}(\epsilon)$ such that if $m >
N_{0}(\epsilon)$, then
\begin{equation}
  \scalemath{1.0}{P\left( \left| \frac{W_m}{a_m} \right| \leq K(\epsilon) \right) \geq 1-\epsilon.}
\end{equation}
Since Eq.~\eqref{eq:asymptotic_normal_3} holds for $W_m =
\widehat{\theta}-\theta_0$, we can apply the Chebyshev inequality
\cite{Keener2010}. This inequality states that for any $\epsilon > 0$, we can
choose a constant $K(\epsilon) = 1/\sqrt{\epsilon} > 0$ such that for a
sufficiently large $m$,
\begin{equation}
  \scalemath{0.92}{P_{\theta}\left( \left| \sqrt{m} \left( \widehat{\theta}\left(\vec{X}_\nu(\check{\theta}_1), \ldots, \vec{X}_\nu(\check{\theta}_m)\right) - \theta_0 \right) \right| < K(\epsilon) \right) \geq 1-\epsilon.}
\end{equation}
Therefore, we find that $\widehat{\theta}\left(\vec{X}_\nu(\check{\theta}_1), \ldots,
\vec{X}_\nu(\check{\theta}_m)\right) - \theta_0 = O_p(1/\sqrt{m})$.


\subsection{\label{Appendix3} Phase estimation around $\theta_{opt}$ }

The estimator in Eq.~(\ref{eq:estimator_proposal}) of the proposed adaptive
homodyne strategy minimizes the variance over all the phases within the
parametric space from $[0,\pi/2)$. This estimator yields a variance above the
QCRB for phases close to $\theta_{opt}$ for a small number of adaptive steps
$m$. On the other hand, we note that the two-step protocol from Ref.
\cite{Berni2015} is closer to the QCRB for $\theta\approx\theta_{opt}$ than our
strategy for small $m=3,5$. However, our proposed strategy is more general, and
encompasses the one from Ref. \cite{Berni2015}. By taking an uneven splitting
ratio between the first and second step and $\check{\theta}_{1}=\theta_{opt},$
our strategy with $m=2$ reduces to the strategy from Ref. \cite{Berni2015}.
Nevertheless, by construction our strategy is guaranteed to achieve the QCRB for
any phase in the asymptotic limit.

\subsection{\label{AppendixExtension}Convergence of the CHH strategy over the
  full range $[0,\pi)$}

The asymptotic consistency of $\widehat{\theta}\left(\vec{X}_\nu(\check{\theta}_1)
  ,\ldots,\vec{X}_\nu(\check{\theta}_m)\right)$ can be extended to a compact parameter
space $\tilde{\Theta} \subset [0, \pi)$ by replacing the initial step of
sampling $\vec{X}_{\nu}(\check{\theta}_1)$ with a sample $\vec{X}_\nu^{(1)} \in
\mathbb{C}^{\nu}$ of size $\nu$ obtained from a series of heterodyne
measurements. This preliminary heterodyne sampling provides sufficient
information about $\theta_0$ to overcome the non-identifiability problem
inherent to homodyne measurements, thereby allowing for extending the parameter
space.

The proof of asymptotic consistency of the MLE and convergence of the estimator
variance to the QCRB in Appendix~\ref{Appendix_Proof} remains valid for the full
range $[0,\pi)$ if we can show a new version of \textbf{Lemma~\ref{obs2}}. This
new \textbf{Lemma} incorporates the likelihood function from the heterodyne
measurement as a multiplicative factor in the likelihood function for the
subsequent homodyne measurements. From this result, the remaining
\textbf{Lemmas}~\ref{obs3}, \ref{obs4} and \textbf{Theorem}~\ref{MainTheorem} in
Appendix~\ref{Appendix_Proof} for proving almost sure convergence of the MLE
follow analogously.

\begin{lemma}
  \label{obs_heterodyne}
  For any $\theta \in \tilde{\Theta}$ with $\theta \neq \theta_0$, it follows
  that:
  \begin{equation}
    \label{eq:lem_heterodyne}
    g_{\mathrm{CHH}}(\theta) = \sup_{\check{\theta} \in \tilde{\Theta}} \mathrm{E}_{\theta_0}\left[ e^{\left(   R_{\mathrm{CHH}}(\theta, \theta_0, \check{\theta}) \right)^{1/2}} \right] < 1,
  \end{equation}
  where
  \begin{equation*}
    \scalemath{0.95}{
      R_{\mathrm{CHH}}(\theta, \theta_0, \check{\theta}) = \log \left( \frac{ f\left(  \vec{X}_\nu^{\text{Het}}  \mid \theta\right)  f(\vec{X}_{\nu}(\check{\theta}) \mid \phi(\theta, \check{\theta}) )}{    f\left( \vec{X}_\nu^{\text{Het}}  \mid \theta\right) f(\vec{X}_{\nu}(\check{\theta}) \mid \phi(\theta_0, \check{\theta})   )} \right),}
 \end{equation*}
\end{lemma}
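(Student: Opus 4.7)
My plan is to reduce \textbf{Lemma \ref{obs_heterodyne}} to the already established \textbf{Lemma \ref{obs2}} together with a strict Hellinger-type bound for the preliminary heterodyne stage. The key observation is that the heterodyne sample $\vec{X}_\nu^{\text{Het}}$ and the adaptive homodyne sample $\vec{X}_\nu(\check{\theta})$ come from independent experiments performed on disjoint batches of probe states, so $R_{\mathrm{CHH}}(\theta,\theta_0,\check{\theta})$ splits additively into a heterodyne log-ratio $R^{\text{Het}}(\theta,\theta_0)$ that does not depend on $\check{\theta}$ plus the homodyne log-ratio $R(\theta,\theta_0,\check{\theta})$ already analyzed in Appendix~\ref{Appendix_Proof}.

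Next, I would exploit this independence to factor the expectation, applying to each factor the same manipulation (Gaussian integration for homodyne, integration of a square root of a product of densities for heterodyne) used in the proof of \textbf{Lemma \ref{obs2}}. This yields
\begin{equation*}
  \mathrm{E}_{\theta_0}\!\left[e^{R_{\mathrm{CHH}}/2}\right] = B_{\text{Het}}(\theta,\theta_0)\,\left[\frac{2\,\sigma(\phi(\theta,\check{\theta}))\,\sigma(\phi(\theta_0,\check{\theta}))}{\sigma^2(\phi(\theta,\check{\theta}))+\sigma^2(\phi(\theta_0,\check{\theta}))}\right]^{\nu/2},
\end{equation*}
where $B_{\text{Het}}(\theta,\theta_0) = \int \sqrt{f(\vec{z}\mid\theta)\,f(\vec{z}\mid\theta_0)}\,d\vec{z}$ is the Hellinger affinity of the two heterodyne product distributions. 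Taking the supremum over $\check{\theta}\in\tilde{\Theta}$ leaves the heterodyne factor untouched and bounds the homodyne factor by $1$ via \textbf{Lemma \ref{obs1}}, so that $g_{\mathrm{CHH}}(\theta) \leq B_{\text{Het}}(\theta,\theta_0)$.

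The concluding step is to show $B_{\text{Het}}(\theta,\theta_0) < 1$ whenever $\theta \neq \theta_0$ in $\tilde{\Theta}\subset [0,\pi)$. The Hellinger affinity is always at most $1$, with equality if and only if the two distributions coincide almost everywhere. Since the heterodyne likelihood of a probe in state $\rho(\theta) = U^{\dagger}(\theta)\rho_r U(\theta)$ is exactly its Husimi $Q$-function, which is a centered Gaussian on $\mathbb{C}$ whose covariance matrix is rotated by $\theta$ from the squeezing axis, and since the only phase-shift symmetry of $\rho_r$ is the $\pi$-inversion, distinct $\theta,\theta_0 \in [0,\pi)$ produce distinct Gaussian densities. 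Thus $B_{\text{Het}}(\theta,\theta_0) < 1$, and the conclusion $g_{\mathrm{CHH}}(\theta) < 1$ follows.

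The main obstacle is precisely this final identifiability step: the adaptive homodyne factor alone is not strictly less than $1$ in general, because $\sigma^2$ is reflection-symmetric about $\pi/2$ and hence the equation $\sigma^2(\phi(\theta,\check{\theta})) = \sigma^2(\phi(\theta_0,\check{\theta}))$ can be solved by some $\check{\theta}\in\tilde{\Theta}$ even when $\theta = \pi-\theta_0 \neq \theta_0$. All of the strictness must therefore come from the heterodyne prelude, which is the single ingredient that breaks the $\theta \leftrightarrow \pi-\theta$ ambiguity inherent to homodyne on the enlarged parameter space. Once the strict Hellinger bound is in place, Lemmas~\ref{obs3} and \ref{obs4} and \textbf{Theorem~\ref{MainTheorem}} from Appendix~\ref{Appendix_Proof} carry over verbatim, establishing almost-sure consistency of the MLE and asymptotic saturation of the QCRB over the full interval $[0,\pi)$.
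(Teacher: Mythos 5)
Your proposal is correct and follows the same basic factorization as the paper's proof: independence of the heterodyne and homodyne batches splits $g_{\mathrm{CHH}}(\theta)$ into the product of a heterodyne Hellinger affinity $B_{\text{Het}}(\theta,\theta_0)$ and the homodyne factor of Lemma~\ref{obs2}. Where you diverge is in how strictness is obtained. The paper evaluates the heterodyne integral in closed form from the explicit likelihood in Eq.~\eqref{eq:ext2}, arriving at $\left[\cosh(r)\sqrt{1-\cos^2(\theta-\theta_0)\tanh^2(r)}\right]^{-\nu}$, which is $<1$ precisely when $\sin^2(\theta-\theta_0)>0$ and $r>0$; you instead invoke the general fact that a Hellinger affinity equals $1$ only for almost-everywhere identical densities, together with identifiability of the Husimi $Q$ functions over $[0,\pi)$. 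Both routes are valid; the paper's computation additionally yields an explicit exponential rate in $\nu$, while yours is shorter and more robust to the precise form of the heterodyne density. You are in fact more careful than the paper on one point: the paper asserts the homodyne factor is $<1$ by citing Lemma~\ref{obs2}, but that lemma's strictness rests on $\sigma(\phi(\theta,\check{\theta}))\neq\sigma(\phi(\theta_0,\check{\theta}))$, which can fail on the enlarged parameter space (e.g.\ when $\phi(\theta,\check{\theta})+\phi(\theta_0,\check{\theta})\equiv 0 \bmod \pi$), so that factor should only be bounded by $1$ there, exactly as you do; the conclusion survives because all of the required strictness comes from the heterodyne prelude. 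The one small omission is that you should state $r>0$ explicitly, since for $r=0$ the probe is rotation invariant and $B_{\text{Het}}(\theta,\theta_0)=1$.
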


\begin{proof}
  Let $\vec{x}_\nu^{\text{Het}} = \alpha_1, \ldots, \alpha_\nu$ be an observed sample
  from $\vec{X}_\nu^{\text{Het}}  \in \mathbb{C^{\nu}}$ and assume $\theta \neq \theta_0$ for $\theta,
  \theta_0 \in \tilde{\Theta}$. In this case,
\begin{widetext}
\begin{equation}
  \label{eq:ext0}
  \begin{split}
    g_{\mathrm{CHH}}(\theta) &=  \sup_{\check{\theta} \in \tilde{\Theta}}   \int_{\mathbb{C}^\nu} \sqrt{ f\left( \vec{x}_\nu^{\text{Het}}  \mid \theta\right)  f\left( \vec{x}_\nu^{\text{Het}} \mid \theta_0\right) }   d  \vec{x}_\nu^{\text{Het}} \int_{\mathbb{R}^{\nu}} \sqrt{  f\left(\vec{x}_{\nu}(\check{\theta}) \mid \phi(\theta, \check{\theta}) \right) f\left( \vec{x}_{\nu}(\check{\theta}) \mid  \phi(\theta_0, \check{\theta})   \right) } \, d\vec{x}_{\nu}(\check{\theta})\\
                             &=   \int_{\mathbb{C}^\nu} \sqrt{ f\left( \vec{x}_\nu^{\text{Het}} \mid \theta\right)  f\left( \vec{x}_\nu^{\text{Het}} \mid \theta_0\right) }   d \vec{x}_\nu^{\text{Het}} \, \times g(\theta) ,
  \end{split}
\end{equation}
\end{widetext}
where $g(\theta)$ is defined in Eq.~\eqref{eq:lemma2}.

Since $g(\theta) < 1$ according to Lemma~\ref{obs2} in Appendix
\ref{Appendix_Proof}, then it suffices to show that:
 \begin{equation}
   \label{eq:ext1}
   \int_{\mathbb{C}^\nu} \sqrt{ f\left( \vec{x}_\nu^{\text{Het}} \mid \theta\right)  f\left( \vec{x}_\nu^{\text{Het}} \mid \theta_0\right) }   d \vec{x}_\nu^{\text{Het}}  < 1,
 \end{equation}
 For a heterodyne measurement outcomes $\vec{x}_\nu^{\text{Het}}$, the likelihood
 function is expressed as
\begin{equation}
  \label{eq:ext2}
  f( \vec{x}_\nu^{\text{Het}} \mid \theta) =  \left[ \pi \cosh(r)  \right]^{-\nu}  \prod_{j=1}^{\nu} e^{-|\alpha_j|^2 - \tanh(r) \mathrm{Re}\left[ \alpha_j^2 e^{\iu 2 \theta} \right] }.
\end{equation}
Then, the integral of the square root of the product of the likelihoods for
$\theta$ and $\theta_0$ becomes
\begin{widetext}
\begin{equation}
  \begin{split}
    &\int_{\mathbb{C}^\nu} \sqrt{ f\left( \vec{x}_\nu^{\text{Het}} \mid \theta\right)  f\left( \vec{x}_\nu^{\text{Het}} \mid \theta_0\right) }   d \vec{x}_\nu^{\text{Het}} =   \left[ \pi \cosh(r)  \right]^{-\nu}    \prod_{j=1}^{\nu} \int_{\mathbb{C}} e^{-  |\alpha_j|^2 - \frac{\tanh(r)}{2} ( \mathrm{Re}\left[ \alpha_j^2 e^{\iu 2 \theta} \right] - \mathrm{Re}\left[ \alpha_j^2 e^{\iu 2 \theta_0} \right]) } d\alpha_j \\
                                                                                                             &= \left[ \pi \cosh(r)  \right]^{-\nu}   \prod_{j=1}^{\nu} \int_{\mathbb{R}^2} e^{-  \left( a_j^2 + b_j^2 \right) - \tanh(r) \cos\left( \theta - \theta_0 \right) \left[    \cos(\theta + \theta_0)(a_j^2-b_j^2) - 2 a b \left( \sin(2 \theta) + \sin(2 \theta_0) \right)         \right] } da_jdb_j \\
    &= \left[\cosh(r) \sqrt{1- \cos^2(\theta-\theta_0)\tanh^2(r)} \right]^{-\nu}.
  \end{split}
\end{equation}
\end{widetext}
Since the term $\left[\cosh(r) \sqrt{1-
    \cos^2(\theta-\theta_0)\tanh^2(r)}\right]^{-1}$ <$ 1$ for $r > 0$, the
assertion follows.
\end{proof}

Since \textbf{Lemma \ref{obs_heterodyne}} holds, the proof of the asymptotic
consistency of the MLE and convergence to the QCRB proceeds analogously to the
proof in Appendix~\ref{Appendix_Proof}.

\subsubsection{Numerical tests of estimator normality for the CHH strategy  over $[0,\pi)$.}
\label{NormalTest}
Analogous to the case of adaptive homodyne in Section~\ref{sec:Op_dyne}, the MLE
from the CHH strategy is expected to show asymptotic normality, defined in
Eq.~\eqref{eq:asymptotic_normal_3}. We tested the normality of the MLE over
$[0,\pi)$. To this end, we conducted five independent Anderson–Darling
goodness-of-fit tests on $N = 1000$ samples of
$\sqrt{m}\left(\widehat{\theta}\left(
    \vec{X}_\nu^{\text{Het}},\vec{X}_{\nu}(\check{\theta}_2)
    ,\ldots,\vec{X}_{\nu}(\check{\theta}_m)\right)- \theta_0\right)$, with
$m=20$, and $\nu = 50$ for $5$ randomly selected values of $\theta_0$ within
$[0,\pi)$. The null hypothesis for each test posited that the samples from the
MLE follow a normal distribution $\mathcal{N}\left( \theta_0, \frac{1}{ 1000
    F_X(\theta_{\text{opt}})} \right)$ with $r = 1$. Using the \verb|DescTools|
package in \textbf{\textit{R}}, these tests produced a set of five p-values.
Subsequently, the Fisher method, implemented in the package of R \verb|poolr|,
calculated a combined p-value of $0.9150486$. Given this very high combined
p-value, we fail to reject the null hypothesis that the samples follow the
specified normal distribution. This study provides statistical evidence that the
MLE reaches the asymptotic normality with a moderate number of adaptive steps,
in this test $m=20$ steps.

\subsubsection{Degree of non-Gaussianity of the MLE for small $m$.}
\label{NonGaussianEffects}
In general, the distribution of MLEs obtained from the CHH strategy is a
non-Gaussian distribution for small $m$, and is expected to approach a normal
distribution as $m$ increases. We investigate the degree of non-Gaussianity of
the MLE for small $m$ by studying the first four central moments of the
estimator distribution from the CHH strategy as a function of $m$.
Figure~\ref{fig11:app} shows (a) the first moment (bias), (b) the second moment
(normalized Holevo variance), the third moment (skewness), and the fourth moment
(excess of kurtosis), for the CHH strategy with $N = 3000$ and $r$. We observe
that the bias quickly converges to zero, the variance approaches the QCRB, and
both the skewness and excess kurtosis tend to zero as $m$ increases. These
results show that while for a small $m$ the MLE shows non-Gaussian
characteristics, such as a small asymmetry (skewness) and the presence of
outliers (excess of kurtosis), the MLE tends to a normal distribution as $m$
increases. These results highlight the importance of having a sufficiently large
number of probe states $N$, and enough $m$, to ensure the convergence to the
QCRB for any $\theta \in [0, \pi)$.

\begin{figure}[ht]
  \centering
  \includegraphics[scale = 0.32]{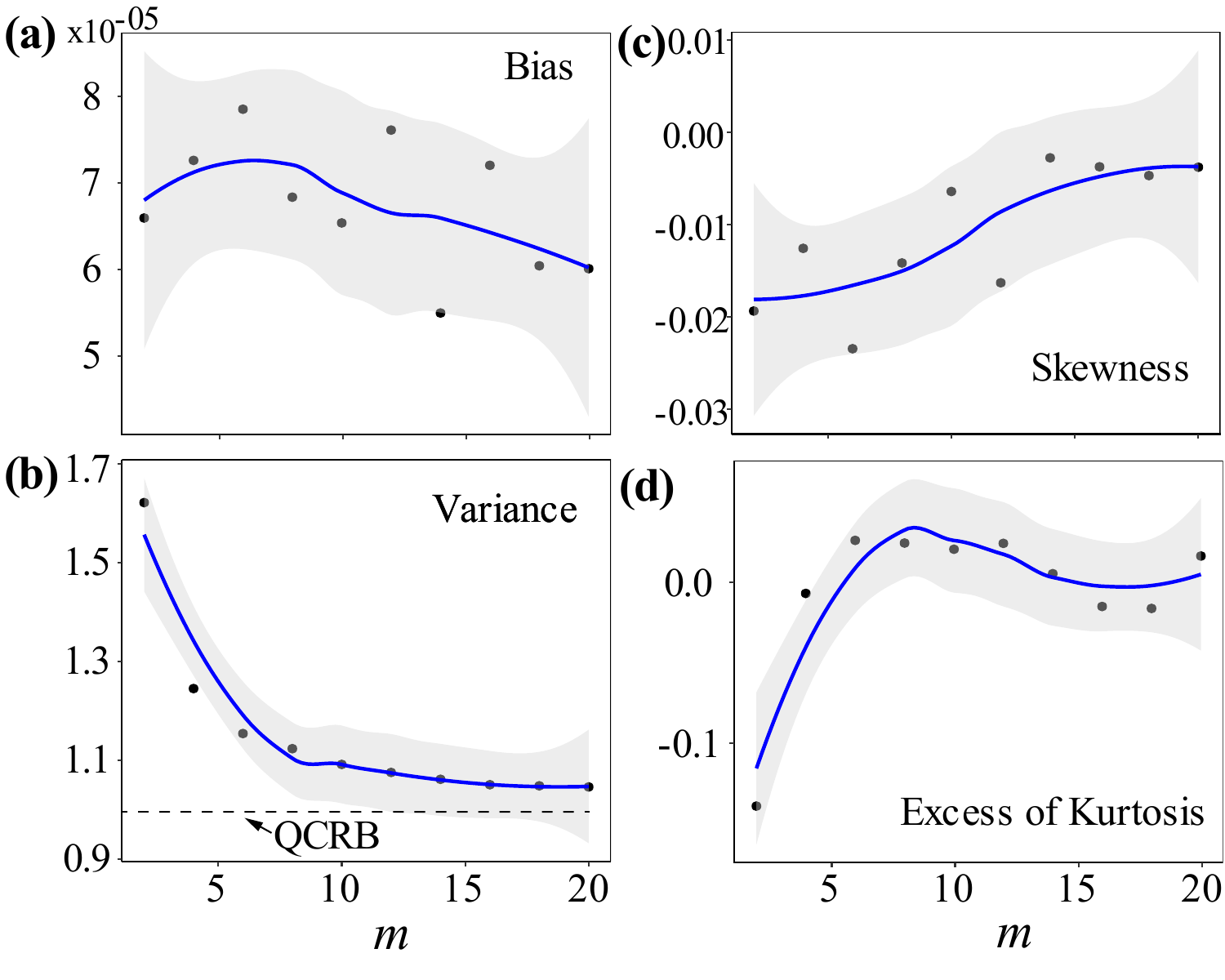}
  \caption{\label{fig11:app} Degree of non-Gaussianity of the MLE for the CHH
    strategy with $N = 3000$ and $r = 1$. Each panel represents the evolution of
    one of the central moments of the estimator distribution as a function of
    the number of adaptive steps $m$. \textbf{(a)} First moment (Bias),
    \textbf{(b)} Second moment (normalized Holevo variance), \textbf{(c)} Third
    moment (Skewness), and \textbf{(d)} Fourth moment (excess of kurtosis).
    Black dots represent the average of 5 Monte Carlo simulations, each with
    $1\times10^4$ samples. The blue lines represent the tendencies provided by
    the localized regression method with span of $0.9$, and the light-gray
    shadows represent the $95$ percent confidence interval. Note that the bias
    rapidly converges to zero and the variance approaches the QCRB. The skewness
    and excess of kurtosis tend to zero as $m$ increases.}
\end{figure}

\section*{Funding}
This work was funded by the National Science Foundation (NSF) Grants \#
PHY-2210447, FRHTP \# PHY-2116246, and the Q-SEnSE QLCI \# 2016244.

\section*{Acknowledgments}
We thank Laboratorio Universitario de Cómputo de Alto Rendimiento (LUCAR) of
IIMAS-UNAM for their service on information processing. We would like to thank
the UNM Center for Advanced Research Computing, supported in part by the
National Science Foundation, for providing the high performance computing
resources used in this work.

\section*{Disclosures}
The authors declare no conflicts of interest.

\section*{Data availability}
The data that support the findings of this study are available from the authors upon request.



\end{document}